\newtheorem{theorem}{Theorem}
\newcommand{\ket}[1]{\left\vert{#1}\right\rangle}
\newcommand{\bra}[1]{\left\langle{#1}\right\vert}
\newcommand{\Dpsi}[1]{D_{#1}\ket{\psi}}
\newcommand{\eye}{\mathbbm{1}}
\newtheorem{lemma}[theorem]{Lemma}
\theoremstyle{remark}
\newcommand{\bmt}{\begin{pmatrix}}
\newcommand{\emt}{\end{pmatrix}}
\newcommand{\vpu}[1]{^{\vphantom{#1}}}
\newcommand{\lsym}[2]{\genfrac{(}{)}{0.1 pt}{}{#1}{#2}}
\DeclareMathOperator{\SL}{SL}
\DeclareMathOperator{\LCM}{LCM}
\DeclareMathOperator{\Tr}{Tr}
\newcommand{\spacer}{\parbox{0.001 pt}{\rule{0 ex}{8 ex}}}
\title{Linear Dependencies in Weyl-Heisenberg Orbits}
\author[*,$\dag$]{{\normalsize Hoan Bui Dang}}
\author[$\ddag$]{{\normalsize Kate Blanchfield}}
\author[$\ddag$]{{\normalsize Ingemar Bengtsson}}
\author[*]{{\normalsize D. M. Appleby}}
\affil[*]{{\normalsize \textit{Perimeter Institute for Theoretical Physics, Waterloo, Ontario N2L 2Y5, Canada}}}
\affil[$\dag$]{{\normalsize \textit{Physics Department, University of Waterloo, Ontario N2L 3G1, Canada}}}
\affil[$\ddag$]{{\normalsize \textit{Stockholms universitet, AlbaNova, Fysikum, S-106 91 Stockholm, Sweden}}}
\date{}
\begin{document}

\maketitle

\begin{abstract}
Five years ago, Lane Hughston showed that some of the symmetric informationally complete positive operator valued measures (SICs) in dimension 3 coincide with the Hesse configuration (a structure well known to algebraic geometers, which arises from the torsion points of a certain elliptic curve). This connection with elliptic curves is signalled by the presence of linear dependencies among the SIC vectors. Here we look for analogous connections between SICs and algebraic geometry by performing computer searches for linear dependencies in higher dimensional SICs. We prove that linear dependencies will always emerge in Weyl-Heisenberg orbits when the fiducial vector lies in a certain subspace of an order 3 unitary matrix. This includes SICs when the dimension is divisible by 3 or equal to 8 mod 9. We examine the linear dependencies in dimension 6 in detail and show that smaller dimensional SICs are contained within this structure, potentially impacting the SIC existence problem. We extend our results to look for linear dependencies in orbits when the fiducial vector lies in an eigenspace of other elements of the Clifford group that are not order 3. Finally, we align our work with recent studies on representations of the Clifford group.
\end{abstract}

\newpage

\section{Introduction}

Symmetric informationally-complete positive operator-valued measures (SICs) \cite{Zau99, Ren04} represent a general form of measurement in quantum theory. As the more familiar projective measurements are associated to an orthonormal basis in Hilbert space, a SIC is associated to an over-complete set of $N^2$ unit vectors in $\mathcal{H}^{N}$ such that the absolute value of the scalar product between any distinct two is always constant, i.e.
\begin{equation} \sum_{\nu =0}^{N^2-1}\frac{1}{N}|\psi_{\nu} \rangle \langle \psi_{\nu}| = {\mathbbm 1}
\end{equation}
\begin{equation}
\left| \left\langle \psi_{\nu} | \psi_{\mu} \right\rangle \right| = \left\{ \begin{array}{ccc} 1 & \mbox{if} & \mbox{{\small $\nu = \mu$}} \\ \
\frac{1}{\sqrt{N+1}} & \mbox{if} &  \mbox{{\small $\nu \neq \mu$}} \end{array} \right.  
\end{equation}
SICs have practical applications in quantum state tomography \cite{Sco06, Zhu11}, quantum communication \cite{comm1,comm2}, quantum cryptography \cite{crypto1,crypto2}, classical high precision radar \cite{radar1,radar2} and classical speech recognition \cite{speech}. A significant amount of work is aimed towards proving their existence in all dimensions although no general proof is currently known. Numerical studies have successfully found SICs when $N \leq 67$ \cite{Sco10} and analytical solutions have been published for 20 of these dimensions (see references in \cite{Sco10} plus recent solutions for $N=16$ in \cite{Monomial} and $N=28$ in \cite{monomial2}). Such results promote the belief that SICs can always be found, but the current solutions are rather dimension-dependent and do not provide an overall coherent picture. 

All known SICs are group covariant, meaning they can be obtained from the action of a group on a single fiducial vector. The vast majority of SICs are covariant with respect to the Weyl-Heisenberg (WH) group and if the dimension is a prime, it has been shown this is the only possible group \cite{Zhu10}. In this paper we only consider such SICs. 

There are a few curious properties of SICs that are not yet understood, but clearly hint at an underlying mathematical structure. This scenario could support suggestions that SICs should play a deeper role in quantum mechanics \cite{Fuc09}. The most well-known of these properties is the conjecture of Zauner symmetry, which states that every SIC vector is an eigenvector of a certain order three unitary matrix \cite{Sco10,App05,Zau99}. Another property stems from current SIC solutions; in a preferred basis, the components of the SIC vectors are expressed as nested radicals. Such complex numbers are quite exceptional and arise because the Galois group of the polynomial equation one must solve is solvable \cite{Sco10,Hulya}. However, it is a third observation that we shall focus on in this paper, namely the connection between SICs and algebraic curves in dimension 3.

The WH group appears naturally in the theory of elliptic curves \cite{Mumford}, suggesting this might be an advantageous framework for the study of SICs. Furthermore, there is a very close connection between elliptic curves and linear dependencies among SIC vectors in dimension 3, first noted by Lane Hughston \cite{Hug}. This leads to the Hesse configuration in the complex projective plane---a particular structure well-known in algebraic geometry. We aim to explore this special connection, looking for analogous linear dependencies in higher dimensions. It was our hope that another configuration in complex projective space might reveal some new structure behind SICs and broaden the areas from which they can be studied. This did not turn out quite as planned for all the SICs we looked at but we do see similarities for certain SICs in dimension 8, which could signal another connection to elliptic curves. No sets of $N$ linearly dependent SIC vectors exist for $N=4$, 5 or 7, but we do find them again when $N=6$, 9 and 12---however, in these latter cases the linear dependencies are not uniquely tied to SICs. Additionally, we found an unexpected method of generating 2- and 3-dimensional SICs from the linear dependencies among vectors in 6- and 9-dimensional SICs, respectively. We don't know whether this can be generalised to all dimensions divisible by 3.

We observe that the reverse of our problem---to show that there exists an open set of fiducials giving rise to WH orbits without linear dependencies---has been considered for signal processing purposes and has been solved in prime dimensions \cite{harmonic1, harmonic2} and more recently in all finite dimension \cite{Romanos}.
 
In section 2 we review some basic facts about the Weyl-Heisenberg and Clifford groups, and explain notations and terminology used in the rest of the paper. Section 3 introduces the Hesse configuration and linear dependencies in SICs in dimension 3. Section 4 proves that linear dependencies can always be found in WH orbits when the fiducial vector is an eigenvector of an order 3 unitary matrix. In dimensions divisible by 3 and dimensions equal to 8 mod 9, these WH orbits include SICs. In section 5, we give numerical results from our computer search for dependencies in dimensions 4 to 9. We focus specifically on data relating to the SIC case in dimensions 6 and 9. In section 6, we prove the sets of linearly dependent vectors in dimension 6 contain 2-dimensional SICs and show that sets in dimension 9 contain 3-dimensional SICs. We do not have data for dimension 12 or higher, so do not know whether this pattern continues for all dimensions divisible by 3. In which case, this could open a new avenue for a SIC existence proof. Section 7 turns away from SICs altogether and proves that linear dependencies will always arise in WH orbits with a fiducial vector invariant under a unitary matrix whose order is a non-trivial divisor of $N$ in dimensions where $N$ is a product of distinct prime numbers. We give an example of this using linear dependencies in even dimensions. Section 8 relates our results to a recent representation of the Clifford group \cite{Monomial}. Section 9 summarises our results and points at some possible clues for the SIC existence problem.

\section{The Weyl-Heisenberg and Clifford groups}

In this section we review some basic facts about the Weyl-Heisenberg (WH) and Clifford groups which will be needed in the sequel. For more details see (for
example) \cite{App05}.  Let $|0\rangle, \dots, |N-1\rangle$ be the standard basis in dimension $N$. Define the operators $X$ and $Z$ by
\begin{align}
X |u\rangle & = |u+1\rangle \\ Z |u\rangle & = \omega^u |u\rangle 
\end{align}
where addition of ket-labels is mod $N$ and $\omega= e^{\frac{2 \pi i}{N}}$. The Weyl-Heisenberg displacement operators are then defined by
\begin{equation}
D_{\mathbf{p}} = \tau^{p_1 p_2} X^{p_1} Z^{p_2}
\end{equation}
where $\mathbf{p} = \left(\begin{smallmatrix} p_1
  \\ p_2 \end{smallmatrix}\right)$, $\tau= -e^{\frac{\pi i}{N}}$.
With this definition
\begin{align}
D_{\mathbf{p}}D_{\mathbf{q}} &= \tau^{\langle
  \mathbf{p},\mathbf{q}\rangle} D_{\mathbf{p}+\mathbf{q}} & D^{\dagger}_{\mathbf{p}} & = D^{\vphantom{\dagger}}_{-\mathbf{p}}
\label{eq:Dprd}
\end{align}
where the symplectic form $\langle \cdot , \cdot \rangle$ is defined by
\begin{equation}
\langle \mathbf{p}, \mathbf{q}\rangle = p_2q_1 -p_1 q_2
\end{equation}
It is convenient to introduce the notation
\begin{equation}
\bar{N} = \begin{cases} N \qquad & \text{$N$ odd} \\ 2N \qquad & \text{$N$ even} \end{cases}
\end{equation}
We then have $\tau^{\bar{N}} = 1$ and 
\begin{equation}
D_{\mathbf{p}} = D_{\mathbf{q}} \qquad \text{if $\mathbf{p} = \mathbf{q}$ mod $\bar{N}$}
\end{equation}
(note that if $N$ is even $\tau^N=-1$ and $\mathbf{p}=\mathbf{q}\mod N$ implies $D_{\mathbf{p}} = \pm D_{\mathbf{q}}$).
We  define the Weyl-Heisenberg group to consist of all operators of the form $\tau^n D_{\mathbf{p}}$, with $n \in \mathbb{Z}_{\bar{N}}$, $\mathbf{p} \in \mathbb{Z}_{\bar{N}}^2$ (where $\mathbb{Z}_{\bar{N}}$ is the set $\{0,1,\dots, \bar{N}-1\}$ equipped with arithmetic modulo $\bar{N}$).

In many situations the phase factor in the first of Equations (\ref{eq:Dprd}) is not important. It is accordingly convenient to introduce the notation $A \dot{=} B$ to signify that $A$ and $B$ are equal up to a phase.  We then have
\begin{equation}
D_{\mathbf{p}} D_{\mathbf{q}} \dot{=} D_{\mathbf{q}} D_{\mathbf{p}} \dot{=} D_{\mathbf{p}+\mathbf{q}}
\end{equation}
for all $\mathbf{p}$, $\mathbf{q}$.  Neglecting phases in this way the WH group reduces to the $N^2$ operators $\{D_{\mathbf{p}} \colon \mathbf{p} \in \mathbb{Z}_N^2\}$ (in more technical language these facts can be expressed by the statement that  the collineation group is $\cong \mathbb{Z}_N \times \mathbb{Z}_N$).  Similarly, if we ignore phases then the orbit of a vector $|\psi\rangle$ under the WH group reduces to the set $\{D_{\mathbf{p}}|\psi\rangle \colon \mathbf{p}\in\mathbb{Z}_N^2\}$ (in more technical language this set is the orbit of $|\psi\rangle$ considered as  an element of $\mathbbm{C}\mathrm{P}^{N-1}$ under the action of the collineation group).  In the sequel, by an abuse of terminology, we shall refer to this set  as the WH orbit of $|\psi\rangle$.

The symplectic group $\SL(2,\mathbb{Z}_{\bar{N}})$ consists of all
matrices
\begin{equation}
\mathcal{G} = \bmt \alpha & \beta \\ \gamma & \delta \emt
\end{equation}
such that $\alpha$, $\beta$, $\gamma$, $\delta \in
\mathbb{Z}_{\bar{N}}$ and $\det \mathcal{G} = 1$ (mod $\bar{N}$).  To each such
matrix there corresponds a unitary $U_\mathcal{G} $, unique up to a phase, such
that
\begin{equation}
U\vpu{\dagger}_{\mathcal{G}}  D\vpu{\dagger}_{\mathbf{p}} U^{\dagger}_{\mathcal{G} } =
D\vpu{\dagger}_{\mathcal{G} \mathbf{p}}
\end{equation}
If $\beta$ is relatively prime to $\bar{N}$ we have the explicit formula
\begin{equation}
U_{\mathcal{G} } = \frac{e^{i\theta}}{\sqrt{N}} \sum_{u,v=0}^{N-1}
\tau^{\beta^{-1}\left(\delta u^2 -2 u v+\alpha v^2\right)} |u\rangle
\langle v |
\label{eq:UFdef}
\end{equation}
where $\beta^{-1}$ is the multiplicative inverse of $\beta$ (mod
$\bar{N}$) and $e^{i\theta}$ is an arbitrary phase.    If
$\beta$ is not relatively prime to $\bar{N}$ we use the
fact~\cite{App05} that $\mathcal{G} $ has the decomposition
\begin{equation}
\mathcal{G} =\mathcal{G} _1\mathcal{G} _2 = \bmt \alpha_1 & \beta_1 \\ \gamma_1 & \delta_1 \emt \bmt
\alpha_2 & \beta_2 \\ \gamma_2 & \delta_2 \emt
\label{eq:primeDecomp}
\end{equation}
where $\beta_1$, $\beta_2$ both are relatively prime to $\bar{N}$ so
that $U_{\mathcal{G} _1}$, $U_{\mathcal{G} _2}$ can be calculated using
Equation~(\ref{eq:UFdef}).  $U_\mathcal{G} $ is then given by
\begin{equation}
U_{\mathcal{G} } = U_{\mathcal{G} _1} U_{\mathcal{G} _2}
\end{equation}
up to an arbitrary phase. We shall refer to unitaries of the form $e^{i \theta} U_{\mathcal{G}}$ as symplectic unitaries.  The Clifford group then consists of all
products $e^{i\theta}D_{\mathbf{p}} U_{\mathcal{G} }$. We have
\begin{align}
\left(D_{\mathbf{p}} U_{\mathcal{G} }\right)\left(D_{\mathbf{p}'} U_{\mathcal{G} '} \right)&\dot{=}D_{\mathbf{p}+\mathcal{G}\mathbf{p}'} U_{\mathcal{G}\mathcal{G}'}
&
\left(D_{\mathbf{p}} U_{\mathcal{G} }\right)^{\dagger} & \dot{=} D_{-\mathcal{G}^{-1}\mathbf{p}} U_{\mathcal{G}^{-1}}
\end{align}
for all $\mathbf{p},\mathbf{p}',\mathcal{G},\mathcal{G}'$.

As mentioned in the Introduction it turns out that  every known WH SIC fiducial vector is an eigenvector of a canonical order $3$ Clifford unitary and, conversely, that in every dimension where an exhaustive search has been made every canonical order $3$ Clifford  unitary has a SIC fiducial vector as one of its eigenvectors~\cite{App05,Sco10}.     In these statements the term ``canonical order $3$ Clifford unitary'' refers to a unitary of the form
\begin{align}
U =e^{i\theta} D_{\mathbf{p}} U_{\mathcal{G}}
\end{align}
where $\mathbf{p}$ is arbitrary and $\mathcal{G}$ is any element of $\SL(2,\mathbb{Z}_{\bar{N}})$ with the property
\begin{align}
\Tr(\mathcal{G}) = -1 \mod N
\end{align}
If $N=3$ one needs to impose the additional requirement that $\mathcal{G}\neq I$. It is then guaranteed that, with a suitable choice of the phase $e^{i\theta}$, $U$ is order $3$.   The matrix $\mathcal{G}$ is always conjugate~\cite{Flammia,Sco10}, either to the Zauner matrix 
\begin{equation}
\mathcal{Z} = \bmt 0 & -1\\ 1 & -1 \emt
\label{eq:canonical_Z}
\end{equation}
or its square or, in the case of dimensions $N = 9k+3$ with $k\ge 1$, to the matrix
\begin{equation}
\mathcal{A} = \bmt 1 & N+3 \\ N+3k & N-2 \emt
\end{equation}
In this paper we will confine ourselves to the case of fiducials which are eigenvectors of $U_{\mathcal{Z}}$.  Choosing the phase in Equation~(\ref{eq:UFdef}) to be $e^{i\theta} = e^{\frac{i\pi(N-1)}{12}}$ one finds~\cite{Zau99} that the eigenvalues of $U_{\mathcal{Z}}$ are $1,\eta,\eta^2$, where $\eta=e^{\frac{2\pi i}{3}}$.   We denote the three corresponding eigenspaces as $\mathcal{H}_1, \mathcal{H}_{\eta}$, and $\mathcal{H}_{\eta^2}$, and give their dimensions in Table 1.
\begin{table}[ht!]
\begin{center}
\begin{tabular}{c|c c c}
\hline
          & $N=3k$ & $N=3k+1$ & $N=3k+2$ \\ \hline
  $1$     & $k+1$  & $k+1$    & $k+1$ \\
  $\eta$     & $k$    & $k$      & $k+1$ \\
  $\eta^{2}$ & $k-1$  & $k$      & $k$ \\ \hline
\end{tabular}
\label{tab:subspaces_Z}
\caption{Multiplicities of the eigenvalues of  $U_{\mathcal{Z}}$ for different dimensions.}
\end{center}
\end{table}
Except in dimensions equal to $8 \mod 9$,  fiducials are only found in the highest dimensional eigenspace(s) of $U_{\mathcal{Z}}$.  When $N=8\mod 9$ fiducials are found in all 3 eigenspaces~\cite{Sco10}. We will hereafter refer to $\mathcal{H}_1$ (which, the numerical data suggests, always contains SIC fiducials) as \textit{the} Zauner subspace.

Finally, let us note that $\mathcal{Z}$ has non-trivial fixed points if and only if the dimension is divisible by $3$.  Specifically:
\begin{align}
\mathcal{Z}{\bf p} &= {\bf p} && \Leftrightarrow &{\bf p}& \in
\begin{cases}\spacer \left\{\bmt 0 \\ 0 \emt \right\} \qquad & \bar{N} \neq 0 \mod 3
\\
\spacer \left\{ \bmt 0 \\  0 \emt , \bmt \frac{\bar{N}}{3} \\ \frac{2\bar{N}}{3} \emt ,  \bmt \frac{2\bar{N}}{3} \\ \frac{\bar{N}}{3} \emt \right\} 
 \qquad & \bar{N} = 0 \mod 3
\end{cases} 
\label{eq:Z_fix_points}
\end{align}

\section{Linear dependencies in dimension 3}

Lane Hughston showed that vectors in some 3-dimensional SICs coincide exactly with the nine inflection points of a particular family of elliptic curves \cite{Hug}. The pattern of linear dependencies among the SIC vectors reproduces the Hesse configuration \cite{Hesse, Art09} and so we shall first look in more detail at linear dependencies coming from SICs in dimension 3. This section forms our motivation for studying higher dimensional dependency structures and, although SICs in dimension 3 have been studied fairly extensively before now \cite{Zhu10,App05}, we hope the following approach provides a useful way of looking at things.

In dimension 3, there is a continuous one-parameter family of SICs. The nine explicit (un-normalised) vectors are 
the columns of the matrix 
\begin{equation} 
\label{eq:sic_vectors}
\left[ \begin{array}{ccccccccc}
0 & 0 & 0 & -e^{i \theta} & -e^{i \theta} \eta & -e^{i \theta} \eta^2 & 1 & 1 & 1 \\
1 & 1 & 1 & 0 & 0 & 0 & -e^{i \theta} & -e^{i \theta} \eta & -e^{i \theta} \eta^2 \\
-e^{i \theta} & -e^{i \theta} \eta & -e^{i \theta} \eta^2 & 1 & 1 & 1 & 0 & 0 & 0 \\
\end{array} \right]
\end{equation}
The parameter $\theta$ needs only be considered in the interval $[ 0 , \frac{2 \pi}{6} ]$ as all other SICs are equivalent to these via transformations in the extended Clifford group. It is clear from inspection that any such SIC in dimension 3 will have three sets of three linearly dependent vectors. However, there are additional dependencies for certain choices of $\theta$. We can find these values by setting the determinant of any three other SIC vectors to zero. This gives
\begin{equation}
\left| \begin{array}{ccc}
0 & -e^{i \theta} \eta^m & 1 \\
1 & 0 & -e^{i \theta} \eta^n \\
-e^{i \theta} \eta^k & 1 & 0 \\
\end{array}
\right| = 1 - e^{3 i \theta} \eta^{k+m+n} = 0 
\end{equation}
and we find the condition
\begin{equation}
e^{3 i \theta} = \eta^r \quad \quad r \in \left\{ 0,1,2 \right\}  
\end{equation}
Given our limits for $\theta$, we are left with the options $\theta = 0$ or $\frac{2 \pi}{9}$ in order to obtain a `special' SIC with 12 sets of three linearly dependent vectors. The SIC with $\theta=0$ is a particularly special case as its fiducial vector
\begin{equation}
| \phi \rangle = \left( \begin{array}{c}
0 \\
1 \\
-1 \\
\end{array} \right)
\label{eq:sic_fid}
\end{equation}
is invariant under all symplectic unitaries. We can see this by rewriting the projector onto this vector as
\begin{equation}
| \phi \rangle \langle \phi | = \mathbbm{1} - U_{\mathcal{P}} \quad \quad \mathcal{P} = \left( \begin{array}{cc}
-1 & 0 \\
0 & -1 \\
\end{array} \right) 
\label{eq:sl_invariance}
\end{equation}
The matrix $\mathcal{P}$ is called the parity matrix, and its unitary representative $U_{\mathcal{P}}$ is one of the phase-point operators introduced by Wootters \cite{Woo87}. It is unique in being the only element (other than the identity element) invariant under conjugation by every other element in the symplectic group and so, with the identity element, forms the group's centre. As the right hand side of the first of Equations (\ref{eq:sl_invariance}) is invariant under $SL(2, {\mathbbm Z}_N)$, the SIC fiducial vector $| \phi \rangle$ must be an eigenvector of every symplectic unitary matrix. Of interest to us is its invariance under the action of the symplectic canonical order 3 unitaries. There are eight such unitaries altogether (including the Zauner unitary) but we consider the four unitaries
\begin{eqnarray}
U_{ \left( \begin{smallmatrix} 1&0\\ 1&1 \end{smallmatrix} \right)} = \left( \begin{array}{ccc} \eta & 0 & 0 \\ 0 & 1 & 0 \\ 0 & 0 & 1 \end{array} \right) \quad 
U_{\left( \begin{smallmatrix} 0&2\\ 1&2 \end{smallmatrix} \right)} = \frac{e^{\frac{i \pi}{6}}}{\sqrt{3}} \left( \begin{array}{ccc} 1 & 1 & 1 \\ \eta^2 & 1 & \eta \\ \eta^2 & \eta & 1 \end{array} \right) \nonumber \\ \\
U_{\left( \begin{smallmatrix} 2&2\\ 1&0 \end{smallmatrix} \right)} = \frac{e^{\frac{i \pi}{6}}}{\sqrt{3}} \left( \begin{array}{ccc} 1 & \eta^2 & \eta^2 \\ 1 & 1 & \eta \\ 1 & \eta & 1 \end{array} \right) \quad 
U_{\left( \begin{smallmatrix} 1&2\\ 0&1 \end{smallmatrix} \right)} = \frac{e^{\frac{i \pi}{6}}}{\sqrt{3}} \left( \begin{array}{ccc} 1 & \eta & \eta \\ \eta & 1 & \eta \\ \eta & \eta & 1 \end{array} \right) \nonumber
\label{eq:sco3u}
\end{eqnarray}
since the other four are squares of these. Note that $U_{\left( \begin{smallmatrix} 0&2\\ 1&2 \end{smallmatrix} \right)}$ corresponds to the Zauner matrix in Equation (\ref{eq:canonical_Z}) of Section 2. The SIC fiducial defined in Equation (\ref{eq:sic_fid}) is then precisely the intersection point of the 2-dimensional subspaces $\mathcal{H}_1$ from each of these four symplectic canonical order 3 unitaries. We can then ask whether any other SIC vectors are also contained within these same subspaces. It turns out that the remaining SIC vectors are distributed evenly amongst the subspaces: 2 further vectors per unitary subspace. 

When we view the full SIC as a $3 \times 3$ lattice in the complex projective plane (where each point is a SIC vector and each line is a 2-dimensional subspace) we generate the Hesse configuration. In Figure \ref{fig:hesse}, the fiducial vector is found at the point at the bottom left of the plane. The left-hand image shows the three linear dependencies in every dimension 3 SIC while the right-hand image shows the pattern of symplectic canonical order 3 unitary subspaces that intersect at the fiducial vector $| \phi \rangle$, i.e. the SIC fiducial for which $\theta=0$. The Clifford group also contains WH conjugates of the form $D_{p} U D_{p}^{\dagger}$, where $U$ is one of the four unitaries in Equation (27), and in this way we find a total of 12 order three unitaries and hence 12 subspaces. Remarkably, they always intersect in sets of four at a point described precisely by one of the SIC vectors. We recover the famous Hesse configuration of 9 points and 12 lines in the projective plane \cite{Hesse}. To obtain this full picture from Figure \ref{fig:hesse}, we must translate the lines in the right-hand image. Each line can be translated three times, so we find 12 lines in total.

\begin{figure}[ht!]
\centering
\includegraphics[width=0.9\textwidth]{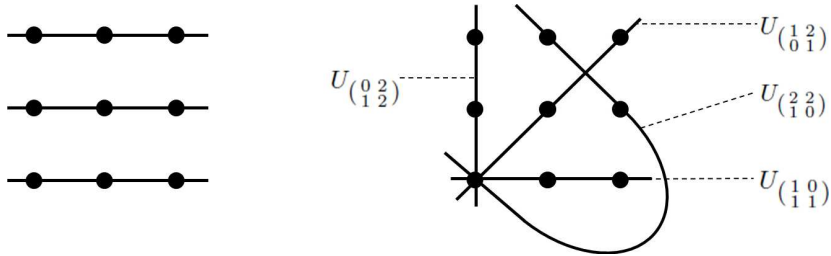}
\caption{A SIC in the complex projective plane. The left-hand image shows the three sets of three linear dependencies in every SIC and the right-hand image shows the four subspaces from Equation (27) that intersect at the fiducial vector with $\theta=0$.}
\label{fig:hesse}
\end{figure}

Hesse originally discovered his configuration by examining elliptic curves. In particular, the family of cubics 
\begin{equation}
x^3 + y^3 + z^3 + \lambda x y z = 0 \quad \quad \lambda \in \mathbbm{C}
\label{eq:hesse_cubic}
\end{equation}
is invariant under the WH group. It is often called the Hesse pencil of cubics. The points on the curve where the determinant of the Hessian (the matrix containing the second derivatives of the curve) vanish are called inflection points. The determinant of the Hessian is also a cubic in the same family and B\'{e}zout's theorem states that two cubics in $\mathbbm{C}\mathrm{P}^{2}$ will intersect in nine points. These are the nine inflection points and they coincide exactly with the nine SIC vectors, i.e. then the values of $x$, $y$ and $z$ in Equation (28) are the components in a SIC vector. The same inflection points (and thus the same SIC) arise for all values of $\lambda$, except for the special cases when $\lambda = \infty$ and $\lambda^3 = -1$. These special cases give rise to singular members of this family. Each singular curve, $T_i$, degenerates into three projective lines
\begin{equation}
\begin{array}{l}
T_{0} : xyz = 0 \\
T_{1} : (x+y+z)(x+\eta y+\eta^{2}z)(x+\eta^{2}y+\eta z) = 0 \\
T_{2} : (x+\eta y+z)(x+\eta^{2}y+\eta^{2}z)(x+y+\eta z) = 0 \\
T_{3} : (x+\eta^{2}y+z)(x+\eta y+\eta z)(x+y+\eta^{2}z) = 0 \\
\end{array} 
\end{equation}
Each line passes through three inflection points and each inflection point lies on four lines, forming our familiar Hesse configuration. It is often denoted $\left( 9_4 , 12_3 \right)$ to reflect the property that the 9 points each lie in 4 lines and the 12 lines each contain 3 points. A more detailed survey of the Hesse configuration is given in \cite{Art09} and its connection to SICs in \cite{Ingemar}.

This pattern of subspace intersections has a connection to mutually unbiased bases. Each of the twelve lines in the Hesse configuration represents a 2-dimensional subspace in $\mathbbm{C}^3$ with an orthogonal vector that is uniquely determined up to a scalar. We shall call this a normal vector from here onwards. These 12 normal vectors reproduce the standard MUB in dimension 3, obtained from the eigenvectors of the cyclic subgroups of the WH group. Note that we use the acronym MUB to mean a complete set of $N+1$ mutually unbiased bases. As the MUB vectors lie orthogonally to the 2-dimensional eigenspaces, they must sit in the 1-dimensional eigenspace $\mathcal{H}_{\eta}$.

So far we have been looking at the single SIC with $\theta=0$. However, one also gets a Hesse configuration when $\theta = \frac{2 \pi}{9}$. Most SICs in dimension 3 are part of an extended Clifford group orbit containing eight SICs \cite{App05}. The exceptions are the SICs with $\theta=0$ and $\theta=\frac{2 \pi}{6}$ in Equation (\ref{eq:sic_vectors}), which lie on orbits containing one and four SICs, respectively. The `special SIC', with $\theta=\frac{2 \pi}{9}$, is then on an extended Clifford orbit with seven other SICs and all eight of these SICs are special in the sense that they have 12 linearly dependent sets of three vectors. This extends to the MUB connection. Above, we found a MUB from the normal vectors---lines in the Hesse configuration---of linear dependencies in the `special SIC' with $\theta=0$. Note that this MUB is itself a single orbit of the extended Clifford group. We can extract eight more MUBs in the same way from the other eight `special SICs'. They can be arrived at in a different way. The Clifford group contains altogether 36 order 3 subgroups with elements of the form $D_{\bf p} U$, with $U$ a symplectic order 3 unitary. So far we have used only 12 of these. The remaining 24 have non-degenerate spectra, and their eigenbases can be collected into 8 sets of 3 mutually unbiased bases. Each such triplet is in itself an orbit under the WH group, and forms a MUB when a basis from the standard MUB is added \cite{Alltop,Kate}. This ties in exactly with the eight `special SICs' with $\theta=\frac{2 \pi}{9}$ through the Hesse configuration \cite{Kate}. Each SIC, together with one of these WH orbit MUBs, reproduces the pattern $\left( 9_4 , 12_3 \right)$. It is perhaps interesting to note that we could have arrived at the nine `special SICs' from knowledge of the symplectic canonical order 3 unitaries and the additional WH orbit MUBs alone.

\section{Linear dependencies from the eigenvectors of $U_{\mathcal{Z}}$}

In this section we show how linear dependencies will arise in every dimension. The results in this section apply to all orbits in which the fiducial vector lies in certain eigenspaces of $U_{\mathcal{Z}}$, depending on the dimension, and not only to SIC fiducials.

From Section 2, we can label a vector in a WH orbit, up to a phase, by a 2-component vector ${\bf p}$ in $\mathbb{Z}_N^2$. In this representation, we are interested in the orbit of ${\bf p}$ under the action of $\mathcal{Z}$. If ${\bf p}$ is one of the fixed points of $\mathcal{Z}$ it will form a singlet, otherwise $\{{\bf p},\mathcal{Z}{\bf p},\mathcal{Z}^2{\bf p}\}$ forms a triplet since  $\mathcal{Z}$ is of order 3. If the dimension $N$ is divisible by 3, $\mathcal{Z}$ has exactly 3 singlets, which are given by Equation (\ref{eq:Z_fix_points}). In other dimensions, the only singlet is  ${\bf p}=(0,0)$. In the Hilbert space, we will use the same terminology to call $D_{\bf{p}}\ket{\psi}$ a singlet if ${\bf p}$ is a singlet, and to call $\{D_{\bf p}\ket{\psi},D_{\mathcal{Z}{\bf p}}\ket{\psi},D_{\mathcal{Z}^2{\bf p}}\ket{\psi}\}$ a triplet otherwise.

\begin{theorem}
\label{theoremLD1}
In dimension $N=3k$ any subset of $N$ vectors in a WH orbit whose fiducial vector is an eigenvector of $U_{\mathcal{Z}}$ is linearly dependent if it contains $k$ triplets, or $k-1$ triplets and 3 singlets.
\end{theorem}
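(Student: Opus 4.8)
The plan is to exploit the fact that $U_{\mathcal{Z}}$ acts on each triplet by a cyclic permutation and to count how the resulting eigenvectors crowd into the eigenspaces $\mathcal{H}_1,\mathcal{H}_\eta,\mathcal{H}_{\eta^2}$ of Table~1. Write $U_{\mathcal{Z}}\ket{\psi}=\lambda\ket{\psi}$ with $\lambda\in\{1,\eta,\eta^2\}$. Using the intertwining relation $U_{\mathcal{Z}}D_{\mathbf{p}}U_{\mathcal{Z}}^{\dagger}=D_{\mathcal{Z}\mathbf{p}}$ together with $U_{\mathcal{Z}}^3=\eye$, I would first check that $U_{\mathcal{Z}}D_{\mathbf{p}}\ket{\psi}=\lambda D_{\mathcal{Z}\mathbf{p}}\ket{\psi}$. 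Hence $U_{\mathcal{Z}}$ cyclically permutes the members of a triplet $\{D_{\mathbf{p}}\ket{\psi},D_{\mathcal{Z}\mathbf{p}}\ket{\psi},D_{\mathcal{Z}^2\mathbf{p}}\ket{\psi}\}$ up to the single global phase $\lambda$, while a singlet ($\mathbf{p}$ a fixed point of $\mathcal{Z}$) is itself an eigenvector of $U_{\mathcal{Z}}$ with eigenvalue $\lambda$; in particular all three singlets lie in the one eigenspace $\mathcal{H}_\lambda$.

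The key structural step is the eigendecomposition of a single triplet. Its three vectors span a $U_{\mathcal{Z}}$-invariant subspace of dimension at most three, on which $U_{\mathcal{Z}}$ acts as $\lambda$ times a cyclic shift; the eigenvalues of such an operator are $1,\eta,\eta^2$. Explicitly, for each $\mu\in\{1,\eta,\eta^2\}$ the triplet contributes the single eigenvector
\[ w_{\mu}=\sum_{j=0}^{2}\left(\frac{\lambda}{\mu}\right)^{j}D_{\mathcal{Z}^{j}\mathbf{p}}\ket{\psi}\in\mathcal{H}_{\mu}, \]
which one verifies directly to be a $\mu$-eigenvector. The decisive consequence is that an entire triplet projects onto at most the one-dimensional line through $w_\mu$ in each $\mathcal{H}_\mu$; thus $m$ triplets deposit only $m$ vectors into any chosen eigenspace, and each $w_\mu$ is an explicit combination, with nonzero coefficients, of the three orbit vectors producing it.

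With this in hand both cases reduce to over-filling a small eigenspace. For $N$ vectors consisting of $k$ triplets, I would project onto the smallest eigenspace $\mathcal{H}_{\eta^2}$, of dimension $k-1$; the $k$ eigenvectors $w_{\eta^2}^{(t)}$, one per triplet $t$, are then $k$ vectors in a $(k-1)$-dimensional space and must satisfy a nontrivial relation $\sum_t b_t\, w_{\eta^2}^{(t)}=0$. Since each $w_{\eta^2}^{(t)}$ is a nonzero-coefficient combination of the orbit vectors in triplet $t$, this relation is at once a nontrivial linear dependence among the $N$ chosen vectors. For $N$ vectors consisting of $k-1$ triplets and the $3$ singlets, I would instead project onto $\mathcal{H}_\lambda$: the $k-1$ eigenvectors $w_\lambda^{(t)}$ together with the $3$ singlets give $k+2$ vectors, while $\dim\mathcal{H}_\lambda\le k+1$ in every dimension and for every $\lambda$, so again they are linearly dependent, and lifting the relation (noting that $w_\lambda^{(t)}=\sum_j D_{\mathcal{Z}^j\mathbf{p}}\ket{\psi}$ is just the sum over the triplet) yields the required dependence among the original vectors.

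The step I expect to need the most care is making the second case uniform in $\lambda$. The naive choice of projecting onto $\mathcal{H}_{\eta^2}$, which works so cleanly for $k$ triplets, fails here when $\lambda\neq\eta^2$: the singlets then contribute nothing to $\mathcal{H}_{\eta^2}$, leaving only $k-1$ triplet eigenvectors in a $(k-1)$-dimensional space, which need not be dependent. The resolution is precisely to project onto the eigenspace $\mathcal{H}_\lambda$ in which the singlets actually reside, so that the three singlets supply the extra vectors needed to exceed $\dim\mathcal{H}_\lambda$. The remaining points are routine: confirming the formula for $w_\mu$, and checking that the lifted coefficients are genuinely nonzero, both immediate since each $w_\mu$ is an explicit nonzero-coefficient combination of distinct orbit vectors.
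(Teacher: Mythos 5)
Your proposal is correct and follows essentially the same route as the paper: your vectors $w_1, w_\eta, w_{\eta^2}$ are exactly the paper's $r$-, $s$- and $t$-type vectors (since $U_{\mathcal{Z}}^j D_{\mathbf{p}}\ket{\psi} = \lambda^j D_{\mathcal{Z}^j\mathbf{p}}\ket{\psi}$), and both arguments conclude by counting that $k$ $t$-type vectors overfill the $(k-1)$-dimensional eigenspace $\mathcal{H}_{\eta^2}$, respectively that $k-1$ triplet vectors plus $3$ singlets overfill $\mathcal{H}_\lambda$ of dimension at most $k+1$. Your explicit lifting of the dependence back to the original $N$ vectors, and your remark about why the singlets force the choice of $\mathcal{H}_\lambda$ in the second case, merely spell out details the paper leaves implicit.
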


\begin{proof}
Let $N=3k$ and let $U_{\mathcal{Z}}\ket{\psi_0}=\lambda \ket{\psi_0}$. When ${\bf p}$ is a singlet, $D_{\bf p}\ket{\psi_0}$ lies in the same eigenspace of $U_{\mathcal{Z}}$ as $\ket{\psi_0}$ because
\begin{equation}
U_{\mathcal{Z}} D_{\bf p} \ket{\psi_0} = U_{\mathcal{Z}} D_{\bf p} U_{\mathcal{Z}}^{\dagger} U_{\mathcal{Z}}\ket{\psi_0} = \lambda D_{\mathcal{Z} {\bf p}}\ket{\psi_0} =\lambda D_{\bf p}\ket{\psi_0} 
\label{eq:Z_transformation}
\end{equation}
When ${\bf p}$ is in a triplet, we construct three new linear combinations of vectors in the triplet $\{D_{\bf p}\ket{\psi},D_{\mathcal{Z}{\bf p}}\ket{\psi},D_{\mathcal{Z}^2{\bf p}}\ket{\psi}\}$. Our vectors are
\begin{equation}
\left| r \right\rangle = D_{\bf{p}} \left| \psi_{0} \right\rangle + U_ {\mathcal{Z}} D_{\bf{p}} \left| \psi_{0} \right\rangle + U_ {\mathcal{Z}}^{2} D_{\bf{p}} \left| \psi_{0} \right\rangle 
\nonumber\end{equation}
\begin{equation}
\left| s \right\rangle = D_{\bf{p}} \left| \psi_{0} \right\rangle + \eta^2 U_ {\mathcal{Z}} D_{\bf{p}} \left| \psi_{0} \right\rangle + \eta U_ {\mathcal{Z}}^{2} D_{\bf{p}} \left| \psi_{0} \right\rangle 
\end{equation}
\begin{equation}
\left| t \right\rangle = D_{\bf{p}} \left| \psi_{0} \right\rangle + \eta U_ {\mathcal{Z}} D_{\bf{p}} \left| \psi_{0} \right\rangle + \eta^2 U_ {\mathcal{Z}}^{2} D_{\bf{p}} \left| \psi_{0} \right\rangle \nonumber
\end{equation}
where $\eta=e^{\frac{2 \pi i}{3}}$. We will refer to vectors constructed in this way, for a given choice of {\bf p}, as $r$-type, $s$-type and $t$-type respectively. It is straightforward to check that these vectors are evenly distributed among the three eigenspaces $\mathcal{H}_{1}$, $\mathcal{H}_{\eta}$ and $\mathcal{H}_{\eta^2}$, shown in Table 1. It is also clear that the linear span of the vectors $\{ D_{\bf{p}} \ket{\psi_0}, D_{\mathcal{Z} \bf{p}} \ket{\psi_0}, D_{\mathcal{Z}^{2} \bf{p}} \ket{\psi_0}\}$  equals that of the vectors $\{\ket{r} , \ket{s} , \ket{t}\}$.

If a set of $N$ vectors contains $k$ triplets, this gives $k$-many of each $r$-,$s$-, and $t$-type vector. From Table 1 we know that the $r$-type vectors lie in an eigenspace of dimension $k+1$, the $s$-type vectors lie in an eigenspace of dimension $k$, and the $t$-type vectors lie in an eigenspace of dimension $k-1$. It is clear then that the $k$ $r$-type vectors cannot fully span their subspace while the $k$ $t$-type vectors are overcomplete and therefore linearly dependent.

If a set contains $k-1$ triplets and $3$ singlets, this gives $(k-1)$-many of each $r$-,$s$-, and $t$-type vector, plus the 3 singlets that are in the same eigenspace as the fiducial vector $\ket{\psi_0}$. Therefore there will be $k+2$ vectors lying in the same eigenspace. Since the largest eigenspace of $U_{\mathcal{Z}}$ has dimensionality $k+1$, this results in linear dependency. 
\end{proof}

When the fiducial vector lies in $\mathcal{H}_\eta$ or $\mathcal{H}_{\eta^2}$ we can obtain dependencies using fewer than $N$ vectors. For example, if $\ket{\psi_0} \in \mathcal{H}_{\eta}$ then linear dependency will also occur when the set contains $k-1$  triplets and $2$ singlets, or $k-2$ triplets and $3$ singlets. And if $\ket{\psi_0} \in \mathcal{H}_{\eta^2}$, linear dependency will also occur when the set contains $k-1$ triplets and $1$ singlet, $k-2$ triplets and $2$ singlets, or $k-3$ triplets and $3$ singlets.

If the dimension is not divisible by 3 the matrix $\mathcal{Z}$ has only one fixed point. Nevertheless WH orbits with linear dependencies do arise.

\begin{theorem}
In dimension $N=3k+1$ any subset of $N$ vectors in a WH orbit whose fiducial vector lies in the eigenspace $\mathcal{H}_{\eta}$ or $\mathcal{H}_{\eta^2}$ is linearly dependent if it contains $k$ triplets and 1 singlet.
\end{theorem}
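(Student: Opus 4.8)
The plan is to re-use, almost verbatim, the triplet decomposition from the proof of Theorem~\ref{theoremLD1}, and then to argue that placing the fiducial in a $k$-dimensional eigenspace (rather than the $(k+1)$-dimensional one) is precisely what allows a single extra singlet to force an overflow. Write $U_{\mathcal{Z}}\ket{\psi_0}=\eta\ket{\psi_0}$, treating the case $\ket{\psi_0}\in\mathcal{H}_\eta$; the case $\ket{\psi_0}\in\mathcal{H}_{\eta^2}$ is identical under the exchange $\eta\leftrightarrow\eta^2$.

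First I would, for each of the $k$ triplets $\{D_{\bf p}\ket{\psi_0},D_{\mathcal{Z}{\bf p}}\ket{\psi_0},D_{\mathcal{Z}^2{\bf p}}\ket{\psi_0}\}$ in the chosen subset, replace its three members by the combinations $\ket{r},\ket{s},\ket{t}$ defined in the proof of Theorem~\ref{theoremLD1}. Using $U_{\mathcal{Z}}^3=\eye$ one checks directly that $\ket{r}\in\mathcal{H}_1$, $\ket{s}\in\mathcal{H}_\eta$ and $\ket{t}\in\mathcal{H}_{\eta^2}$, and that the $3\times3$ matrix effecting this replacement on each triplet (a discrete Fourier matrix, up to the eigenvalue phases $\eta,\eta^2$) is invertible. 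Consequently the rank of the whole collection of $N$ vectors is unchanged by the replacement, so the original subset is linearly dependent if and only if the transformed one is.

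Next I would pin down the singlet. Since $N=3k+1$ gives $\bar{N}\not\equiv 0\bmod 3$, Equation~(\ref{eq:Z_fix_points}) shows that the only fixed point of $\mathcal{Z}$ is $(0,0)$, so the single singlet in the subset is $D_{(0,0)}\ket{\psi_0}=\ket{\psi_0}$ itself, which lies in $\mathcal{H}_\eta$. I would then count the transformed vectors eigenspace by eigenspace: the $k$ triplets deposit $k$ vectors of $r$-type into $\mathcal{H}_1$, $k$ of $s$-type into $\mathcal{H}_\eta$, and $k$ of $t$-type into $\mathcal{H}_{\eta^2}$. Adjoining the singlet $\ket{\psi_0}$ to the $s$-type vectors gives $k+1$ vectors lying in $\mathcal{H}_\eta$, whose dimension is only $k$ by Table 1. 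Hence these $k+1$ vectors are linearly dependent; because vectors belonging to distinct eigenspaces of $U_{\mathcal{Z}}$ are automatically linearly independent of one another, this relation extends (by padding with zero coefficients) to a nontrivial dependency among all $N$ transformed vectors, and therefore among the original $N$ vectors.

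There is no deep obstacle here; the argument is a counting one. The two points that must be handled with care are, first, verifying that the per-triplet change of basis is globally invertible so that linear (in)dependence is genuinely preserved, rather than merely the spans; and second, observing that the lone singlet necessarily coincides with $\ket{\psi_0}$ and so falls into the \emph{small} eigenspace $\mathcal{H}_\eta$ (or $\mathcal{H}_{\eta^2}$) of dimension $k$. It is exactly this second point---that the fiducial is not in the $(k+1)$-dimensional space $\mathcal{H}_1$---that makes a single singlet, rather than the three required in Theorem~\ref{theoremLD1}, sufficient to tip the count from $k$ to $k+1$ and force the dependency.
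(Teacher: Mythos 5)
Your proposal is correct and follows essentially the same route as the paper's own proof: decompose each triplet into $r$-, $s$-, $t$-type vectors lying in $\mathcal{H}_1$, $\mathcal{H}_\eta$, $\mathcal{H}_{\eta^2}$ respectively, note that the lone singlet is the fiducial itself and lies in the $k$-dimensional eigenspace, and conclude that $k+1$ vectors in a $k$-dimensional space must be dependent. The extra details you supply (invertibility of the per-triplet Fourier change of basis, identification of the singlet via the fixed-point equation) are exactly the points the paper leaves implicit by appealing to the proof of Theorem~1.
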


\begin{proof}
In dimension $N=3k+1$, the three eigenspaces $\mathcal{H}_{1}$, $\mathcal{H}_{\eta}$ and $\mathcal{H}_{\eta^2}$ have dimensionality $k+1$, $k$, and $k$, respectively. If the fiducial lies in $\mathcal{H}_{\eta}$, which is the only singlet in this case, then together with $k$ $s$-type vectors it will give $k+1$ vectors in this subspace, resulting in linear dependency. The same argument applies when the fiducial vector lies in $\mathcal{H}_{\eta^2}$.
\end{proof}

\begin{theorem}
In dimension $N=3k+2$ any subset of $N-1$ vectors in a WH orbit whose fiducial vector lies in the eigenspace $\mathcal{H}_{\eta^2}$ is linearly dependent if it contains $k$ triplets and 1 singlet.
\end{theorem}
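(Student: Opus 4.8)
The plan is to run exactly the same eigenspace-counting argument used in the proof of the preceding theorem, but now exploiting the fact that in dimension $N=3k+2$ the eigenspace $\mathcal{H}_{\eta^2}$ is one dimension smaller than $\mathcal{H}_1$ and $\mathcal{H}_{\eta}$, so that fewer vectors suffice to force a dependency. From Table 1, for $N=3k+2$ the eigenspaces $\mathcal{H}_1,\mathcal{H}_{\eta},\mathcal{H}_{\eta^2}$ have dimensions $k+1,k+1,k$. Since $N$ is not divisible by $3$, Equation~(\ref{eq:Z_fix_points}) tells us the unique fixed point of $\mathcal{Z}$ is $\mathbf{p}=(0,0)$, so the only singlet in the orbit is $\ket{\psi_0}$ itself, which by hypothesis lies in $\mathcal{H}_{\eta^2}$.

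First I would verify the cardinality bookkeeping: a subset consisting of $k$ triplets and the one singlet contains $3k+1=N-1$ vectors, matching the statement. Next, invoking the $r$-, $s$-, $t$-type construction from the proof of Theorem~\ref{theoremLD1}, each of the $k$ triplets contributes exactly one $t$-type vector lying in $\mathcal{H}_{\eta^2}$, and the span of the three orbit vectors in that triplet coincides with the span of its associated $\{\ket{r},\ket{s},\ket{t}\}$. Collecting the contributions that land in $\mathcal{H}_{\eta^2}$, we obtain the $k$ $t$-type vectors together with the singlet $\ket{\psi_0}$, i.e.\ $k+1$ vectors in a space of dimension $k$, which are therefore linearly dependent.

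The cleanest way to finish---and the only step requiring more than bookkeeping---is to transfer this dependency from the derived vectors back to the orbit vectors. I would argue via cardinalities and spans. The original family of $N-1=3k+1$ orbit vectors and the derived family consisting of the $k$ $r$-type, $k$ $s$-type, $k$ $t$-type vectors together with $\ket{\psi_0}$ (also $3k+1$ vectors) have the same linear span, since each triplet's span equals that of its $\{\ket{r},\ket{s},\ket{t}\}$. The derived family has $k+1$ members lying in the $k$-dimensional space $\mathcal{H}_{\eta^2}$, so $\Lspan$ of the derived family has dimension at most $3k<3k+1$; hence the derived family is linearly dependent, and therefore so is the equinumerous original family sharing the same span. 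I expect no genuine obstacle here: the argument is simply the $N=3k+2$, $\mathcal{H}_{\eta^2}$ specialization of the dimension count already performed for Theorem~\ref{theoremLD1}, the only novelty being that the smaller eigenspace is what permits dropping from $N$ to $N-1$ vectors.
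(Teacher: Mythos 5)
Your proposal is correct and follows essentially the same argument as the paper: the $k$ $t$-type vectors together with the fiducial give $k+1$ vectors in the $k$-dimensional eigenspace $\mathcal{H}_{\eta^2}$, forcing a dependency. The only difference is that you spell out explicitly (via the span-dimension comparison) how the dependency among the derived vectors transfers back to the orbit vectors, a step the paper leaves implicit.
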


\begin{proof}
In dimension $N=3k+2$, the eigenspace $\mathcal{H}_{\eta^2}$ has dimensionality $k$. The $k$ $t$-type vectors, together with the fiducial vector, form a set of $k+1$ vectors in $\mathcal{H}_{\eta^2}$, thus they are linearly dependent.
\end{proof}

Lastly, we note that if a set is linearly dependent, all other sets in its WH orbit are also linearly dependent. In Theorem \ref{theoremLD1}, sets that contain $k$ triplets, or $k-1$ triplets and $3$ singlets, have the property that they remain invariant (as sets) under the action of  $U_{\mathcal{Z}}$. We shall refer to these sets as Zauner invariant sets, and their orbits as Zauner invariant orbits.

\section{Numerical linear dependencies}

In the previous section we proved that we can find $N$ linearly dependent vectors in a WH orbit when the fiducial vector is in a certain Zauner subspace. However, this doesn't quite capture the whole picture. 
In this section, we provide the outcome of our numerical search for linear dependencies in dimensions 4 to 8 (with partial results when $N=9$ and 12) and give additional linear dependencies that are not covered in the previous section. We are especially interested in dimensions divisible by 3, where dependencies are always found if the fiducial lies in the subspace where we expect the SIC fiducials to be, and we discuss details of linear dependencies in dimensions 6 and 9. The additional dependencies in these dimensions do not depend on whether the WH orbit is a SIC or not, although we find some interesting orthogonality relationships in the dependency structure that are unique to SICs (and hence do not occur in WH orbits starting with a non-SIC fiducial vector in $\mathcal{H}_{1}$). Our investigation into dependencies in dimensions 6 and 9 also led to smaller-dimensional SICs, but we will postpone a discussion of these until the next section. In dimension 8, there are SIC fiducials in the Zauner subspace $\mathcal{H}_{\eta^2}$ and our numerical data show that these SICs have more linearly dependent sets of 8 vectors than WH orbits with a vector in $\mathcal{H}_{\eta^2}$. Recall that this was precisely the situation with the `special SICs' in dimension 3 and so could indicate a dimension 8 analogue of the Hesse configuration.

In each dimension $N$, our computer program started with a vector from one of the Zauner subspaces, generated the full orbit under the action of the WH group and then performed an exhaustive search for all sets of $N$ vectors that are linearly dependent. For each subspace of the Zauner unitary in each dimension, we repeated the procedure with a small number of arbitrarily chosen fiducial vectors to ensure that the results are the same for generic fiducials. In the case of SIC vectors, we used the fiducials given in \cite{Sco10}; where there was a choice of Clifford orbits, we repeated the calculation with fiducials from each orbit.

We found no dependence on the choice of fiducial except in dimension 8 where the SIC gives 24,935,160 linear dependencies, which is slightly higher than the generic result given in Table 2. The results often showed a higher number of linear dependencies than accounted for by Theorems 1-3. Table 2 shows the total number of sets of $N$ linearly dependent vectors in WH orbits with fiducial vectors in different Zauner subspaces. The number predicted by the proofs in the previous section, or an upper bound on this number, is given in brackets. For dimension 9 we were not able to perform an exhaustive search.

\begin{table}[ht!]
\begin{center}
\begin{tabular}{c | c c c c c}
\hline
N & 4 & 5 & 6 & 7 & 8\\
\hline 
$\mathcal{H}_1$ & 0 & 0 & 984 & 0 & 0\\
 & 0 & 0 & (768) & 0 & 0 \\
$\mathcal{H}_{\eta}$ & 116 & 0 & 635052 & 5796 & 0\\
 & (68) & 0 & (75342) & (5796) & 0 \\
$\mathcal{H}_{\eta^2}$ & 116 & 6600 & 1790328 & 5796 & 24756984 \\
 & (68) & (4200) & - & (5796) & ($\le$766080) \\
\hline
\end{tabular}
\label{tab:dependencies}
\caption{Linear dependencies in a WH orbit when the initial vector is taken from each Zauner subspace. The numbers in brackets are the total number of sets (or in one case an upper bound on this number) predicted in Section 4.}
\end{center}
\end{table}

Assuming the Zauner conjecture holds, Theorem 1 shows that SICs in dimensions divisible by 3 contain linearly dependent vectors. 
As we are primarily interested in SICs, we turn to look in more detail at these dimensions. When $N=6$ there are 984 sets of six linearly dependent SIC vectors, to a numerical precision of $10^{-15}$. All the results reported in this section are numerical and hold to within this precision. This is a higher number of dependencies than we proved must exist in the previous section, which finds only 768 of the 984 sets we generated numerically. Recall that the proof required the N vectors in the WH orbit to 
be invariant under the Zauner unitary or one of its WH conjugates; the additional 216 sets in dimension 6 are not of this form. Curiously, they are instead invariant under an order 6 unitary matrix $U_\mathcal{M}$ whose symplectic representation is
\begin{equation}
\mathcal{M} = \left( \begin{array}{ll} 
3 & 8 \\ 
4 & 11 \end{array} \right) 
\end{equation}

Each of the 36 SIC vectors lies in 164 different sets of linearly dependent SIC vectors, and clearly each of the 984 linearly dependent sets contains 6 vectors. In the language of complex projective space there are 36 points and 984 hyperplanes, forming the balanced configuration $(36_{164}, 984_6)$. If this constitutes a known pattern in $\mathbbm{C}\mathrm{P}^{5}$ we do not recognise it. 

The 984 linearly dependent sets can themselves be collected into orbits under the WH group. There are 27 orbits of length 36 and 1 orbit of length 12. The short orbit arises because it contains only linearly dependent sets invariant under the subgroup $\{1,X^2Z^4,X^4Z^2\}$. This subgroup commutes with the Zauner unitary defined in Equations (\ref{eq:UFdef}) and (\ref{eq:canonical_Z}), which leaves the fiducial SIC vector invariant. Twenty two of these WH orbits contain sets that are invariant under the Zauner matrix (or sets invariant under a WH conjugate $D_{{\bf p}} U_{\mathcal{Z}} D_{{\bf p}}^{\dagger}$ of the Zauner matrix); the other 6 are invariant under the action of $U_\mathcal{M}$.

Inspired by the 3-dimensional case, we look at the 984 vectors normal to the linearly dependent sets and perform an exhaustive search for orthogonalities between them. There is no basis among them---nor a MUB!---but there are nine sets of four mutually orthogonal normal vectors. These normal vectors all belong to the same orbit under the WH group (of length 36). Each set of four mutually orthogonal vectors forms an orbit under the subgroup of the WH group that contains order 2 elements only. In addition, there are over 20,000 orthogonal triples of normal vectors, only 712 of which lie in a single WH orbit.

The resulting structure in dimension 6 is summarised in Table 3, where we have labelled each WH orbit from 1 to 28. The first orbit is the short one and we see that every set of six linearly dependent vectors in this orbit is formed from two orbits under the subgroup $\{1,X^2Z^4,X^4Z^2\}$. The orbit containing the mutually orthogonal normal vectors is number 11, and orbits 23 to 28 are the ones invariant under $U_\mathcal{M}$---they contain sets of six vectors forming just one orbit under $U_\mathcal{Z}$, rather than two.

\begin{table}[ht!]
\begin{center}
\begin{tabular}{ c | c c c }
\hline
WH orbit & No. orbits under  & No. orbits under                & No. ON \\
         & $U_{\mathcal{Z}}$ & $\{ {\bf 1}, X^2Z^4, X^4Z^2\}$  & quadruples \\
\hline
1 & 2 & 2 & 0 \\ 
2--10 & 2 & 1 & 0 \\ 
11 & 2 & 0 & 9 \\ 
12--13 & 2 & 0 & 0 \\ 
14--22 & 2 & 0 & 0 \\ 
23--28 & 1 & 1 & 0 \\ 
\hline
\end{tabular}
\label{tab:WH_orbits}
\caption{Properties of WH orbits starting with a vector in $\mathcal{H}_{1}$ in dimension 6.}
\end{center}
\end{table}

The six-dimensional case shows some similarities to the three-dimensional one, but the pattern is not as striking. We now ask to what extent the results depend on the fact that we start from 36 vectors forming a SIC. From the previous section, we expect the pattern of at least 768 of the 984 sets of linearly dependent vectors to remain unchanged, i.e. the same {\bf p} vectors to appear in the same sets. To check this, we repeated our calculations starting from a fiducial vector chosen arbitrarily from the Zauner subspace (i.e. in $\mathcal{H}_1$). This will produce 36 vectors typically not forming a SIC. In the cases we looked at, we found all 984 sets of six linearly dependent vectors had exactly the same pattern as when we started with a SIC.

However, there are some subtle differences between SICs and non-SICs in dimension 6, despite the pattern of dependencies being the same when the fiducial is taken arbitrarily from $\mathcal{H}_1$. The four orthogonal quadruples among the normal vectors are present only in the SIC case. An additional 216 orthogonal triples vanish when we pick an arbitrary fiducial from $\mathcal{H}_1$; the triples all have one normal vector in WH orbit 11, one vector from either WH orbit 12 or 13 and the final vector from one of the orbits 23, 24,$\ldots$,28. As every normal vector in orbits 23 to 28 are used, we find $6 \times 36 = 216$ triples. It is perhaps noteworthy that these latter six `SIC-sensitive' WH orbits are precisely the $U_\mathcal{M}$ invariant ones.

In dimension 9, the SIC has 79,767 sets of nine linearly dependent vectors. Of these, 78,795 are predicted by Theorem 1 in Section 4. This large number of dependencies meant that we could not perform as complete computer searches as for the dimension 6 case. For example, we did not compute the scalar products between every pair of normal vectors, although we happen to know they form a basis (this is discussed in more detail in section 8). As with the dimension 6 case, we can collect these 79,767 linearly dependent sets into orbits under the WH group. We find 987 WH orbits: 984 of length 81, 2 of length 27 and 1 of length 9. This gives 975 orbits invariant under the action of the Zauner unitary or one of its WH conjugates, with the remaining 12 invariant under $U_\mathcal{M}$ (186 orbits are invariant under both types). In odd dimensions where $N=3k$, $\mathcal{M}$ takes the form
\begin{equation}
G = \left( \begin{array}{ll} 
k+1 & k \\ 
2k & 2k+1 \end{array} \right) 
\end{equation}

When we repeat the calculations using an arbitrary vector from $\mathcal{H}_1$, we again find exactly the same pattern of linearly dependent vectors (in the sense that the dependencies are labelled by the same {\bf p} vectors). This suggests a distinction between SICs in dimension 3 and SICs in higher dimensions divisible by 3. `Special SICs' in dimension 3 gave rise to 12 sets of linear dependencies, while the others produced only 3. No SICs are `special' in this sense in dimensions 6 and 9; instead we find that not only do all SICs have the same pattern of linearly dependent vectors, but this pattern can be be almost completely reproduced by acting with the WH group on any vector in the Zauner subspace (the pattern of orthogonalities among normal vectors cannot be reproduced). However, we did find one other instance of a SIC fiducial vector giving more linear dependencies than other vectors in the same subspace. In dimension 8, there is a SIC sitting in another eigenspace of the Zauner unitary, namely the one with the smallest dimension, $\mathcal{H}_{\eta^2}$. This SIC has more linear dependencies than a WH orbit starting with an arbitrary vector from $\mathcal{H}_{\eta^2}$: the SIC exhibits 24,935,160 sets of eight linearly dependent vectors while a WH orbit using a vector in $\mathcal{H}_{\eta^2}$ exhibits 24,756,984 (it is this latter value that is shown in Table 2) \cite{Gra_pc}. This may be connected to the fact that the SIC has a larger automorphism group than an arbitrarily chosen vector. We notice that the Hessian SIC in dimension 3 has more linear dependencies that an arbitrarily chosen vector in the same eigenspace. As SICs can always be found in this smaller subspace when the dimension equals 8 mod 9, one might speculate whether a similar connection to elliptic curves exists in this family of dimensions.

We were unable to collect exhaustive information in dimension 12 regarding linear dependencies, so do not know how many sets of 12 linearly dependent vectors are in a SIC or a WH orbit starting from an arbitrary vector in $\mathcal{H}_1$ (or indeed any Zauner subspace). Similarly, we do not know whether their normal vectors exhibit any orthogonality structure analogous to the dimension 6 or 9 cases.

\section{Small SICs in dimension $N=6$}

In dimension $N=6$, among 984 vectors normal to the linear dependent sets generated from an arbitrary fiducial vector in the Zauner subspace (not necessarily a SIC fiducial), we found 30 sets of 4 normal vectors that form 2-dimensional SICs, i.e. in each set, the absolute values for the overlaps between the vectors are $1/\sqrt{3}$ and they lie on a 2-dimensional subspace. This phenomenon also happens in dimension $N=9$, where 3-dimensional SICs are found among the normal vectors. We refer to SICs of this kind as small SICs (because their dimension is smaller than that of the whole space). Attempts to find such small SICs in dimension $N=12$ yielded no positive result, perhaps only due to the impracticality of an exhaustive search. In this section, we give an explanation for the small SICs in dimension $N=6$. In dimension $N=9$, the phenomenon is not yet fully understood.

The 2-dimensional small SIC sets found in dimension $N=6$ are all of the form $\{\ket{\psi}, D_{03} \ket{\psi}, D_{30} \ket{\psi}, D_{33} \ket{\psi}\}$, where $D_{ij}$'s are the displacement operators defined in Section 2. As will be proven in the following theorem, for such sets to form a 2-dimensional SIC, $\ket{\psi}$ just needs to be any normalised eigenvector of $U_\mathcal{Z}$ that is not in the Zauner subspace.  Normal vectors of linearly dependent sets that contain three singlets in the case when the fiducial lies in the Zauner subspace just happen to satisfy this condition, as can been seen from Section 4.

\begin{theorem}
In dimension $N=6$, if $\ket{\psi}$ is an eigenvector of $U_\mathcal{Z}$, then the 4 vectors $\ket{\psi}, D_{03} \ket{\psi}, D_{30} \ket{\psi}$ and $ D_{33}\ket{\psi}$ have constant absolute values for their pairwise overlaps. Furthermore, if $\ket{\psi}$ is not in the Zauner subspace, they span a 2-dimensional subspace and therefore constitute a SIC.
\end{theorem}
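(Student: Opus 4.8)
The plan is to treat the two assertions separately. For the equiangularity statement, I would first note that the four labels $(0,0),(0,3),(3,0),(3,3)$ are precisely the $2$-torsion subgroup of $\mathbb{Z}_6^2$, so by the composition law (\ref{eq:Dprd}) every inner product between two of the four vectors equals, up to a phase, $\langle\psi|D_{\mathbf q}|\psi\rangle$ for $\mathbf q$ one of the three nonzero elements $(0,3),(3,0),(3,3)$. A one-line computation shows that $\mathcal Z$ permutes these three labels cyclically ($\mathcal Z(0,3)=(3,3)$, $\mathcal Z(3,3)=(3,0)$, $\mathcal Z(3,0)=(0,3)$ mod $6$). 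Since $U_{\mathcal Z}\ket{\psi}=\lambda\ket{\psi}$ and $U_{\mathcal Z}D_{\mathbf q}U_{\mathcal Z}^{\dagger}=D_{\mathcal Z\mathbf q}$, taking absolute values gives $|\langle\psi|D_{\mathbf q}|\psi\rangle|=|\langle\psi|D_{\mathcal Z\mathbf q}|\psi\rangle|$, so the three moduli agree. This produces a single off-diagonal overlap magnitude $c$ and proves equiangularity for \emph{any} eigenvector of $U_{\mathcal Z}$.

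For the spanning statement I would exploit an embedded qubit. The operators $\tilde X:=D_{30}=X^3$ and $\tilde Z:=D_{03}=Z^3$ are Hermitian involutions that anticommute ($\tilde Z\tilde X=-\tilde X\tilde Z$), so they generate a subalgebra $\cong M_2(\mathbb C)$ and induce a factorisation $\mathbb C^6\cong\mathbb C^2\otimes\mathbb C^3$ in which $\tilde X,\tilde Z$ act as (qubit Pauli)$\otimes I_3$; the four operators $\{I,D_{03},D_{30},D_{33}\}$ are then, up to phase, the qubit Pauli group acting on the first factor. Because $U_{\mathcal Z}$ conjugates $\tilde X,\tilde Z$ into products of themselves (it permutes the $2$-torsion labels), it normalises this $M_2$, hence factorises as $U_{\mathcal Z}=V\otimes W$ with $V\in U(2)$ and $W\in U(3)$. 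Here $V$ is a non-scalar order-$3$ unitary (it cyclically permutes the three qubit Pauli directions), so its two eigenvalues are distinct and have ratio a primitive cube root of unity $\eta^{\pm1}$.

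The crux is a short eigenvalue bookkeeping. Matching the product spectrum of $V\otimes W$ against the multiplicities $(3,2,1)$ of $(1,\eta,\eta^2)$ in Table 1 forces a unique pairing: $W$ acquires a doubly degenerate eigenvalue, and the eigenspaces come out as $\mathcal H_{\eta^2}=\ket{v_-}\otimes\ket{w_1}$ (one-dimensional, hence automatically a product) and $\mathcal H_{\eta}=\ket{v_+}\otimes(\text{the 2-dimensional }W\text{-eigenspace})$, whereas $\mathcal H_1$ is the only eigenspace containing entangled vectors. Thus every $\ket{\psi}\notin\mathcal H_1$ can be written as a qubit product $\ket{v}\otimes\ket{\chi}$, and the four vectors become $(P\ket{v})\otimes\ket{\chi}$ as $P$ runs over the qubit Paulis; these span $\mathbb C^2\otimes\ket{\chi}$, a $2$-dimensional subspace. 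Within this qubit, the equiangularity from the first part forces the three Bloch components $\langle v|X_q|v\rangle,\langle v|Z_q|v\rangle,\langle v|X_qZ_q|v\rangle$ to have equal modulus, which — being the coordinates of a unit Bloch vector — must each equal $1/\sqrt3=1/\sqrt{N+1}$; so the four vectors form a regular tetrahedron, i.e. a $2$-dimensional SIC.

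I expect the eigenvalue bookkeeping to be the one genuinely delicate step, since it is exactly where the specific multiplicity pattern $(3,2,1)$ of $U_{\mathcal Z}$ in dimension $6$ enters. The subtle point to rule out is the alternative pairing in which the two $\eta$-eigenvectors carry \emph{different} qubit vectors $\ket{v_+},\ket{v_-}$, which would make $\mathcal H_\eta$ contain entangled states and break the $2$-dimensional conclusion; the combinatorics of the multiset $\{1,1,1,\eta,\eta,\eta^2\}$ (together with the fixed eigenvalue ratio $\eta^{\pm1}$ of $V$) is what excludes this case.
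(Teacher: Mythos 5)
Your proposal is correct, and for the substantive (spanning) half of the theorem it takes a genuinely different route from the paper's. The equiangularity part is the same in both: conjugate $D_{\mathbf q}$ by $U_{\mathcal Z}$ and use the cyclic permutation of $(0,3)\to(3,3)\to(3,0)$. For the span, however, the paper never factorises the Hilbert space. It instead introduces a square root $U_{\mathcal W}$ of the Zauner unitary ($\mathcal W^2=\mathcal Z$), forms the operators $R,S,T$ (the combinations of $D_{03},D_{30},D_{33}$ weighted by sixth roots of unity), establishes the algebra $S=T^{\dagger}$, $S^2=T^2=0$, $R^2=\eye$, $ST=\eye+R$, $TS=\eye-R$ and the commutation relations $U_{\mathcal W}S=\omega^4SU_{\mathcal W}$, $U_{\mathcal W}T=\omega^2TU_{\mathcal W}$, and then argues by contradiction on the five eigenspaces of $U_{\mathcal W}$ that $S$ annihilates $\mathcal{H}_{\eta}$ and $T$ annihilates $\mathcal{H}_{\eta^2}$, so that the span of $\{\ket{\psi},R\ket{\psi},S\ket{\psi},T\ket{\psi}\}$ collapses to two dimensions. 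Your argument makes the structure that the paper exploits implicitly into the explicit starting point: $D_{30},D_{03}$ generate a Pauli algebra, giving $\mathbb{C}^6\cong\mathbb{C}^2\otimes\mathbb{C}^3$ (essentially the Chinese-remainder factorisation the paper only deploys in Section 7), $U_{\mathcal Z}=V\otimes W$ by the inner-automorphism (Skolem--Noether) argument, and the multiplicity pattern $(3,2,1)$ forces $\mathcal{H}_{\eta}$ and $\mathcal{H}_{\eta^2}$ to consist of product vectors with a fixed qubit factor. Indeed the paper's operators are exactly your qubit operators in disguise: $R=\tfrac{1}{\sqrt3}(\sigma_x+\sigma_y+\sigma_z)\otimes\eye$, with $S,T$ the associated raising and lowering operators. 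What your route buys: a stronger structural conclusion (every eigenvector outside $\mathcal{H}_1$ is a product state), and an explicit proof that the common overlap equals $1/\sqrt3$ via the Bloch sphere, a point the paper leaves implicit in ``therefore constitute a SIC.'' What the paper's route buys: self-containedness (no factorisation or automorphism theory), and a formulation purely in Clifford-group terms, which is what lets the authors test the identical construction in dimensions $9$, $12$, $15$ and report that it fails there. Finally, the step you flagged as delicate does close exactly as you expect: if both eigenvectors $\ket{v_{\pm}}$ of $V$ (eigenvalue ratio $\eta^{\pm1}$) contributed to $\mathcal{H}_{\eta}$, then the two products built from the remaining eigenvalue $w_c$ of $W$, namely $v_{+}w_c$ and $\eta^{\pm1}v_{+}w_c$, would have to fill the two remaining slots of the multiset $\{1,1,1,\eta,\eta,\eta^2\}$, which are both equal to $1$ --- impossible since those two products are distinct; hence $W$ has a doubly degenerate eigenvalue and the pairing you describe is the only admissible one.
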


\begin{proof}
If $U_\mathcal{Z}\ket{\psi}=\lambda\ket{\psi}$, then $\ket{\psi}=\lambda U_\mathcal{Z}^\dagger\ket{\psi}$. The first part of the theorem can be proven by noticing that
\begin{equation}
\bra{\psi}D_{\mathbf{p}}\ket{\psi} = \bra{\psi}\lambda^* U_\mathcal{Z} D_{\mathbf{p}}  U_\mathcal{Z}^\dagger \lambda \ket{\psi}
= \bra{\psi}D_{\mathcal{Z} \mathbf{p}}\ket{\psi} \end{equation}
and that the Zauner symplectic matrix $\mathcal{Z}$ simply permutes the three points 03, 30 and 33 in phase space according to the following table.
\begin{center}
\begin{tabular}{c | c c c }
$\mathbf{p}$ & 03 & 30 & 33\\
\hline 
$\mathcal{Z}\mathbf{p}$ & 33 & 03 & 30
\end{tabular}
\end{center}

To prove the second part of the theorem, we need to introduce a square root of the Zauner unitary. Let $\mathcal{W}$ be a symplectic matrix given by 
\begin{equation}
\mathcal{W} = \begin{pmatrix}1 & -1\\1 & 0\end{pmatrix} \hspace{8mm}
\mathcal{W}^2 = \begin{pmatrix}0 & -1\\1 & -1\end{pmatrix} = \mathcal{Z}
\end{equation}
Let $U_\mathcal{W}$ be a unitary representative of $\mathcal{W}$, with a phase chosen so that $U_\mathcal{W}^2 = U_\mathcal{Z}$. The eigenspaces of $U_\mathcal{Z}$ and $U_\mathcal{W}$ are described in Table \ref{eigenspacesHK}, where $\omega = e^{2\pi i/6}$ and $\eta = e^{2\pi i/3} =\omega^2$.
\begin{table}[ht!]
\begin{center}
\begin{tabular}{| l | c  | c |  c  | c | c |}
\hline
&\multicolumn{2}{c|}{}&\multicolumn{2}{c|}{}& \\[-3.5mm] 
Eigenspaces of $U_\mathcal{Z}$& \multicolumn{2}{c|}{$\mathcal{H}_1$ (Zauner)} &  \multicolumn{2}{c|}{$\mathcal{H}_{\eta}$} & $\mathcal{H}_{\eta^2}$  \\[0.5mm] \hline  
&\multicolumn{2}{c|}{}&\multicolumn{2}{c|}{}& \\[-3.5mm] 
Eigenvalue & \multicolumn{2}{c|}{1} & \multicolumn{2}{c|}{ $\eta$} & $\eta^2$\\ \hline
&\multicolumn{2}{c|}{}&\multicolumn{2}{c|}{}& \\[-3.5mm] 
Dimensionality & \multicolumn{2}{c|}{3}  & \multicolumn{2}{c|}{2}  & 1\\ \hline
& & & & & \\[-3.5mm] 
Eigenspaces of $U_\mathcal{W}$ & $\mathcal{K}_1$ & $\mathcal{K}_{\omega^3}$ & $\mathcal{K}_{\omega}$ & $\mathcal{K}_{\omega^4}$ & $\mathcal{K}_{\omega^2}$  \\[0.5mm] \hline  
& & & & & \\[-3.5mm] 
Eigenvalue & 1 & $\omega^3$ & $\omega$ &  $\omega^4$ & $\omega^2$\\ \hline
& & & & & \\[-3.5mm] 
Dimensionality & 2 & 1 & 1 &1 & 1\\ 
\hline
\end{tabular}
\caption{Eigenspace structure of $U_\mathcal{Z}$ and $U_\mathcal{W}$.}
\label{eigenspacesHK}
\end{center}
\end{table}

Next we define three new operators $R, S$ and $T$ as follows.
\begin{equation}\label{RST}
\begin{split}
R &= (D_{03} + D_{30} + D_{33})/\sqrt{3}\\
S &= (D_{03} + \omega^2 D_{30} + \omega^4 D_{33})/\sqrt{3}\\
T &= (D_{03} + \omega^4 D_{30} + \omega^2 D_{33})/\sqrt{3}
\end{split}
\end{equation}
Note that $\{\ket{\psi},\Dpsi{03},\Dpsi{30},\Dpsi{33}\}$ and 
$\{\ket{\psi}, R\ket{\psi}, S\ket{\psi}, T\ket{\psi}\}$ have the same linear span. We will prove that $\ket{\psi} \in \mathcal{H}_{\eta}$ implies $S\ket{\psi} = 0$ and $R\ket{\psi} = \ket{\psi}$, and that $\ket{\psi} \in \mathcal{H}_{\eta^2}$ implies $T\ket{\psi} = 0$
and $R\ket{\psi} = -\ket{\psi}$. From that one can prove that the span is 2-dimensional.

The following properties of $R, S$ and $T$ can be derived straightforwardly from their definitions:
\begin{equation}S = T^\dagger, \hspace{5mm} S^2 = T^2 = 0, \hspace{5mm} R^2 = \eye,\end{equation}
\begin{equation}ST = \eye + R, \hspace{14mm}TS = \eye - R.\end{equation}
Moreover, $ST/2$ and $TS/2$ are rank-3 projection operators, and they are orthogonal to each other. One can also verify the following  commutation relations between $R,S,T$ and $U_\mathcal{W}$
\begin{equation}U_\mathcal{W}R = R U_\mathcal{W}, \hspace{5mm} U_\mathcal{W}S = \omega^4 S U_\mathcal{W}, \hspace{5mm} U_\mathcal{W}T = \omega^2 T U_\mathcal{W}.
\end{equation}
These equations tell us how $R,S$ and $T$ permute the eigenspaces of $U_\mathcal{W}$. For example, if $\ket{\psi}$ is an eigenvector of $U_\mathcal{W}$ with eigenvalue $\omega^2$, then $U_W S \ket{\psi} = \omega^4 S U_W \ket{\psi} = S \ket{\psi}$, so $S \ket{\psi}$ is an eigenvector of  $U_\mathcal{W}$ with eigenvalue 1. The full action of $S$ and $T$ on the eigenspaces of $U_\mathcal{W}$ is described in Figure 2 ($R$ commutes with $U_\mathcal{W}$ so it simply leaves the eigenspaces invariant).
\begin{figure}[ht!]
\centering
\includegraphics[scale=0.375]{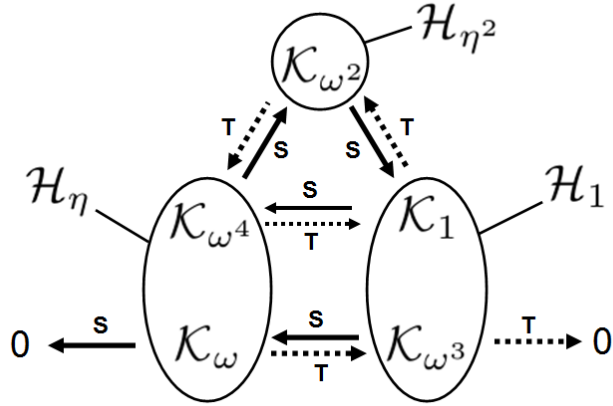}
\caption{The action of $S$ (solid arrow) and $T$ (dotted arrow) on the eigenspaces of $U_W$. $\mathcal{K}_1$ (2D) and $\mathcal{K}_{\omega^3}$ (1D) span $\mathcal{H}_{1}$ (3D). $\mathcal{K}_{\omega^4}$ (1D) and $\mathcal{K}_{\omega}$ (1D) span $\mathcal{H}_{\eta}$ (2D). $\mathcal{K}_{\omega^2}$ (1D) is $\mathcal{H}_{\eta^2}$.}
\end{figure}
Let $\ket{k_0}$, $\ket{k_1}$, $\ket{k_2}$, $\ket{k_3}$ and $\ket{k_4}$ be non-zero eigenvectors of $U_W$ belonging to the eigenspaces $\mathcal{K}_{1}$,  $\mathcal{K}_{\omega}$,  $\mathcal{K}_{\omega^2}$,  $\mathcal{K}_{\omega^3}$ and $\mathcal{K}_{\omega^4}$ respectively. Except for $\ket{k_0}$, the rest of them are unique up to a scalar, because the corresponding eigenspaces are 1-dimensional. As seen from the diagram, we have $S\ket{k_1} = 0$. We are going to prove that $S\ket{k_4}=0$, so that $S\ket{\psi} = 0$ for any $\ket{\psi} \in \mathcal{H}_{\eta}$.

Suppose otherwise, that $S\ket{k_4} \ne 0$. It has to be the case that $S\ket{k_0} = 0$ (for any choice of $\ket{k_0}$), because otherwise $S\ket{k_0}$ will be a non-zero vector in $\mathcal{K}_{\omega^4}$, which is 1-dimensional, which implies $S\ket{k_0} = \alpha \ket{k_4}$ for some non-zero $\alpha$, and therefore $S\ket{k_4} = \alpha^{-1}S^2 \ket{k_0} = 0$, contradicting the assumption that $S\ket{k_4} \ne 0$. Since $S\ket{k_4}$ is a non-zero vector in $\mathcal{K}_{\omega^2}$, which is 1-dimensional, we must also have $S\ket{k_4} = \beta \ket{k_2}$ for some non-zero $\beta$, and therefore $S\ket{k_2} = \beta^{-1}S^2 \ket{k_4}=0.$ So from the assumption that $S\ket{k_4} \ne 0$, we have obtained $0 = S\ket{k_1} = S\ket{k_2} = S\ket{k_0}$, which implies $0 = TS\ket{k_1} = TS\ket{k_2} = TS\ket{k_0}$, which means that $TS$ is orthogonal to the 4-dimensional subspace spanned by  $\mathcal{K}_1$,  $\mathcal{K}_{\omega}$ and  $\mathcal{K}_{\omega^2}$. This contradicts $TS/2$ being a rank-3 projection operator. 

Thus, we conclude that $S\ket{k_4} = 0$, and that $S\ket{\psi} = 0$ for any $\ket{\psi} \in \mathcal{H}_{\eta}$. The identity $R\ket{\psi} = \ket{\psi}$ immediately follows from $0 =TS\ket{\psi} = (\eye-R)\ket{\psi}$. Note that $T\ket{\psi} \ne 0$ (because $TS$ is orthogonal to $ST$) is a non-zero vector in $\mathcal{H}_1$. $T\ket{\psi}$ is not proportional to $\ket{\psi}$ because $T\ket{\psi}$ lies in $\mathcal{H}_{1}$ and $\ket{\psi}$ lies in $\mathcal{H}_{\eta}$. Therefore $\ket{\psi}$, $R\ket{\psi}$, $S\ket{\psi}$ and $T\ket{\psi}$ indeed fully span a 2-dimensional subspace. It's a similar argument to show that $T\ket{k_2} = 0$, which implies $T\ket{\psi}=0$ and $R\ket{\psi}=-\ket{\psi}$ for any $\ket{\psi} \in \mathcal{H}_{\eta^2}$.
\end{proof}

We can thus explain the occurrence of the $2$-dimensional SICs one finds when $N=6$.  Unfortunately, one can't explain the $3$-dimensional SICs one finds when $N=9$ in the same way.  If one uses the construction described in the proof of Theorem~4 when $N=9$ one obtains 9 vectors which do not belong to a $3$-dimensional subspace and for which the overlaps are not constant (they actually take two different values).  The construction  also fails to produce SICs  in dimensions $12$ and $15$.  We summarise the situation in Table~5, which shows the dimensionality of the subspace spanned by the $N^2$ vectors obtained using the construction of Theorem~4 as a function of the eigenspace from which $|\psi\rangle$ is chosen.

Nevertheless, the fact remains that if, in dimension $9$, instead of arbitrarily choosing a vector in $\mathcal{H}_{\eta}$ or $\mathcal{H}_{\eta^2}$ as in Theorem~4, one starts from an arbitrarily chosen fiducial vector $|\psi\rangle$ in the Zauner subspace (not necessarily a SIC fiducial), then one always does find $3$-dimensional SICs among the normal vectors to the linearly dependent sets generated by $|\psi\rangle$.  If this phenomenon repeated in higher dimensions it would open up the intriguing possibility that one might be able to get a proof of SIC existence in this  way.
 
\begin{table}
\begin{center}
\begin{tabular}{c | c  c  c  c }
\hline
                         & $N=6$ & $N=9$ & $N=12$ & $N=15$\\
\hline \\[-3.5mm]
$\mathcal{H}_1$ & 4 & 8 & 12 & 15\\
$\mathcal{H}_{\eta}$ & 2 & 7 & 8 & 15\\
$\mathcal{H}_{\eta^2}$ & 2 & 6 & 8 & 10 \\
\hline
\end{tabular}
\caption{The dimension of the span of vectors created by the subgroup generated by $D_{03}$ and $D_{30}$.}
\end{center}
\end{table}

\section{Generalisation to other symplectic unitaries}

We have been focussing so far on the Zauner unitary because, along with its conjugates, it is the symplectic unitary which appears to be relevant to the SIC existence problem.  However, there are many other symplectic unitaries which give rise to linear dependencies, as we now show.

Before proceeding further we need to take account of a subtlety which is important in even dimensions. Suppose $N$ is even, so that $\bar{N} = 2N$.  Given $\mathcal{G}\in\SL(2,\mathbb{Z}_{\bar{N}})$ let $\mathcal{G}_r\in \SL(2,\mathbb{Z}_N)$ be $\mathcal{G}$ reduced mod $N$.  We define
\begin{enumerate}
\item The $\bar{N}$-order of $\mathcal{G}$ to be its order considered as an element of $\SL(2,\mathbb{Z}_{\bar{N}})$.
\item The $N$-order of $\mathcal{G}$ to be the order of $\mathcal{G}_r$ considered as an element of $\SL(2,\mathbb{Z}_N)$.
\end{enumerate}
This distinction was not important in the case of the Zauner matrix since its $\bar{N}$-order is the same as its $N$-order.  However, the two orders are, in general, different.  Turning to $\mathbb{Z}_{\bar{N}}^2$, we say that  $\mathbf{p}$, $\mathbf{p}'\in\mathbb{Z}_{\bar{N}}^2$ are $N$-distinct if $\mathbf{p}'\neq \mathbf{p}$ mod $N$.  So as to be able to make statements which apply to both even and odd dimensions we will use the terms ``$N$-order'' and ``$N$-distinct'' in odd dimensions also, defining them in the obvious way.

Let $\mathcal{G} \in \SL(2,\mathbb{Z}_{\bar{N}})$ ($N$ even or odd).  We say that $\mathcal{G}$ generates linear dependencies in all eigenspaces if, for each eigenvector $|\psi\rangle$ of $U_\mathcal{G}$, there exist  $\mathbf{p}_1, \dots, \mathbf{p}_N$ which are $N$-distinct and  such that the vectors
\begin{equation}
D_{\mathbf{p}_1}|\psi\rangle, \dots, D_{\mathbf{p}_N}|\psi\rangle
\end{equation}
are linearly dependent.   

Continue to suppose  $\mathcal{G} \in\SL(2,\mathbb{Z}_{\bar{N}})$, and let $n$ be the $N$-order of $\mathcal{G}$.  We say that $\mathbf{p}$ is $\mathcal{G}$-full if the points $\mathbf{p}, \mathcal{G}\mathbf{p}, \dots, \mathcal{G}^{n-1}\mathbf{p}$ are $N$-distinct.  

We then have
\begin{theorem}
\label{thm:genLinDep}
Let $\mathcal{G} \in\SL(2,\mathbb{Z}_{\bar{N}})$ and let $n$ be the $N$-order of $\mathcal{G}$.  Suppose
\begin{enumerate}
\item $n>1$.
\item $N$ is a multiple of $n$.
\item \label{trCond} $\Tr(U_{\mathcal{G}})\neq 0$.
\item \label{fullCond} There exist $N$ points in $\mathbb{Z}_N^2$ which are $N$-distinct and $\mathcal{G}$-full
\end{enumerate}
Then $\mathcal{G}$ generates linear dependencies (in the sense defined above) in all eigenspaces.
\end{theorem}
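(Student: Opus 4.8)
The plan is to reduce linear dependence of the $N$ vectors to a dimension count inside the eigenspaces of $U_{\mathcal G}$, using condition \ref{trCond} to rule out the single configuration in which no dependence could occur. Write $V = U_{\mathcal G}$ and fix its phase so that $V^{\bar{n}} = \eye$, where $\bar{n}$ is the $\bar{N}$-order of $\mathcal G$; since reduction mod $N$ is a homomorphism, $n \mid \bar{n}$, and the eigenvalues of $V$ are all $\bar{n}$-th roots of unity. First I would use conditions 2 and \ref{fullCond} to select the $N$ displacement operators: the set of $\mathcal G$-full points is $\mathcal G$-invariant and hence a union of orbits of size $n$, so the existence of $N$ $N$-distinct $\mathcal G$-full points together with $n\mid N$ lets me pick exactly $N/n$ complete orbits $\{\mathbf p,\mathcal G\mathbf p,\dots,\mathcal G^{n-1}\mathbf p\}$, yielding $N$ points that are $N$-distinct.

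The core step is to analyse a single orbit. Fixing an eigenvector $\ket{\psi}$ of $V$ and writing $\ket{\phi}=D_{\mathbf p}\ket{\psi}$, the identity $U_{\mathcal G}D_{\mathbf q}U_{\mathcal G}^{\dagger}\dot{=}D_{\mathcal G\mathbf q}$ gives $D_{\mathcal G^{m}\mathbf p}\ket{\psi}\dot{=}V^{m}\ket{\phi}$ up to a phase, so the orbit spans the cyclic subspace $\Lspan\{V^{m}\ket{\phi}\colon 0\le m\le n-1\}$. Because $\mathcal G^{n}\equiv\eye \bmod N$ forces $D_{\mathcal G^{n}\mathbf p}\dot{=}D_{\mathbf p}$ (using $D_{N\mathbf r}\dot{=}\eye$), one gets $V^{n}\ket{\phi}=\mu\ket{\phi}$ for a scalar $\mu$; hence this cyclic subspace is $V$-invariant and decomposes into at most $n$ one-dimensional pieces, one in each eigenspace $\mathcal H_{\xi}$ with $\xi^{n}=\mu$. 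The $n$ eigenvalues $\xi$ that occur form a single coset of the group of $n$-th roots of unity inside the $\bar{n}$-th roots. Summing over the $N/n$ orbits, the total span $W$ of the $N$ vectors is $V$-invariant and meets each $\mathcal H_{\xi}$ in a subspace of dimension at most the number of chosen orbits whose coset contains $\xi$.

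The dependence then follows from counting. The $N$ vectors are independent iff $\dim W=N$, which forces $W=\mathcal H$ and hence $\dim(W\cap\mathcal H_{\xi})=\dim\mathcal H_{\xi}=:d_{\xi}$ for every $\xi$; comparing with the per-coset bound and using $\sum_{\xi}d_{\xi}=N=n\cdot(N/n)$, equality throughout forces the $d_{\xi}$ to be constant on each coset. Since $n>1$ (condition 1), every coset of the $n$-th roots of unity sums to zero, so constancy of $d_{\xi}$ on cosets gives $\Tr V=\sum_{\xi}d_{\xi}\,\xi=0$, contradicting condition \ref{trCond}. Therefore $W\neq\mathcal H$ and the $N$ vectors are linearly dependent; as $\ket{\psi}$ was an arbitrary eigenvector of $U_{\mathcal G}$, dependencies arise in all eigenspaces.

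I expect the main obstacle to be the phase bookkeeping in even dimensions, where $\bar{n}$ may be a proper multiple of $n$ and $V^{n}=U_{\mathcal G}^{n}$ need not be a scalar, so that different orbits can carry different values of $\mu$ and populate different eigenvalues of $V$. The device that rescues the count is precisely the coset structure: each group of eigenvalues fed by a given value of $\mu$ is a full coset of $n$-th roots of unity and therefore sums to zero, so the trace argument is insensitive to which coset an orbit lands in. Verifying carefully that $\mathcal G^{n}\equiv\eye\bmod N$ indeed yields $V^{n}\ket{\phi}\propto\ket{\phi}$ and that the resulting $\mu$ is an $(\bar{n}/n)$-th root of unity is the one place where the even/odd distinction set up earlier in this section must be handled with care.
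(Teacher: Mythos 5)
Your proof is correct, and its engine is the same as the paper's: split the span of each $\mathcal{G}$-orbit $\{D_{\mathcal{G}^r\mathbf{p}_j}\ket{\psi}\colon 0\le r<n\}$ into at most one dimension per eigenvalue of $U_{\mathcal{G}}$ (your cyclic-subspace decomposition is the paper's discrete Fourier transform $\ket{\psi_{jk}}=\sum_r\sigma^{-rk}D_{\mathcal{G}^r\mathbf{p}_j}\ket{\psi}$ in different clothing), and then play condition 3 off against a dimension count. The organization differs: the paper argues directly, writing $U_{\mathcal{G}}=e^{i\theta}\sum_{k=0}^{n-1}\sigma^k P_k$, deducing from $\Tr(U_{\mathcal{G}})\neq 0$ that some eigenspace has dimension $<N/n$, and then overfilling that eigenspace with the $N/n$ vectors $\ket{\psi_{j,k}}$ of the corresponding type; you argue by contradiction with a global count, showing that independence would force every eigenspace dimension $d_\xi$ to saturate its per-coset bound, hence be constant on cosets of the $n$-th roots of unity, hence $\Tr(U_{\mathcal{G}})=0$. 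The substantive difference is exactly the point you flag at the end: the paper's spectral decomposition tacitly assumes $U_{\mathcal{G}}^n\propto\mathbbm{1}$, and its $\ket{\psi_{jk}}$ are eigenvectors only if $D_{\mathcal{G}^n\mathbf{p}_j}=D_{\mathbf{p}_j}$ exactly. For odd $N$ this is automatic, since then $\mathcal{G}^n=I$ mod $\bar{N}$; but for even $N$ one only knows $\mathcal{G}^n=I+NA$ mod $2N$, so $U_{\mathcal{G}}^n\dot{=}U_{\mathcal{G}^n}$ conjugates $D_{\mathbf{p}}$ to $(-1)^{\langle\mathbf{p},A\mathbf{p}\rangle}D_{\mathbf{p}}$ and need not be a scalar, and $D_{\mathcal{G}^n\mathbf{p}}=\pm D_{\mathbf{p}}$ with an orbit-dependent sign (e.g.\ $N=4$, $\mathcal{G}=\left(\begin{smallmatrix}1&1\\0&1\end{smallmatrix}\right)$ satisfies all four hypotheses but has $U_{\mathcal{G}}^4$ non-scalar). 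Your per-orbit scalars $\mu_j$, together with the observation that each coset $\{\xi\colon\xi^n=\mu_j\}$ sums to zero, handle precisely this situation, so your argument actually covers even dimensions where the paper's proof, as literally written, has a gap. What the paper's direct version buys is concreteness: it exhibits which linear combination is dependent (the analogue of the $t$-type vectors of Theorem 1). One small phrasing caveat on your side: the eigenvalues occurring in a cyclic subspace are merely \emph{contained in} a coset, not necessarily all of it, but since your count only uses upper bounds this is harmless.
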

\begin{proof}  We first observe that $U_{\mathcal{G}}$ has at least one eigenspace with dimension $<N/n$.  In fact, let 
\begin{equation}
U_{\mathcal{G}} = e^{i\theta} \sum_{k=0}^{n-1} \sigma^k P_k
\end{equation}
be the spectral decomposition of $U_{\mathcal{G}}$, where $\sigma=e^{\frac{2\pi i}{n}}$ and $e^{i\theta}$ is a phase.  By assumption
\begin{equation}
0 \neq \Tr(U_{\mathcal{G}}) = e^{i\theta} \sum_{k=0}^{n-1} \sigma^k \Tr(P_k)
\end{equation}
It follows that the numbers $\Tr(P_k)$ cannot all be the same.  Consequently we must have $\Tr(P_k) < N/n$ for at least one value of $k$.

It follows from condition~\ref{fullCond}  that it is possible to choose $\mathcal{G}$-full  points $\mathbf{p}_1, \dots, \mathbf{p}_{N/n}$ such that the $N$ points
\begin{equation}
\mathcal{G}^r \mathbf{p}_j, \qquad j = 1,\dots,N/n, \ r= 0, \dots, n-1 
\end{equation}
are  $N$-distinct.  Let $|\psi\rangle$ be any eigenstate of $U_{\mathcal{G}}$ and define
\begin{equation}
|\psi_{jk}\rangle = \sum_{r=0}^{n-1} \sigma^{-rk} D_{\mathcal{G}^r\mathbf{p}_j} |\psi\rangle, \qquad j = 1,\dots,N/n,\ k=0,\dots, n-1
\end{equation}
For all $k$
\begin{equation}
|\psi_{j,k}\rangle \in \mathcal{E}_k, \qquad  j = 1, \dots, N/n
\end{equation}
where $\mathcal{E}_1, \dots, \mathcal{E}_n$ are the eigenspaces $U_F$ with the appropriate ordering (note that $\mathcal{E}_k$ isn't necessarily the eigenspace corresponding to the projector $P_k$ defined above). We showed that
\begin{equation}
\dim \mathcal{E}_k < N/n
\end{equation}
for some $k$.  For that value of $k$ the vectors $|\psi_{1,k}\rangle, \dots , |\psi_{N/n,k}\rangle$ are linearly dependent, implying that the vectors
\begin{equation}
D_{\mathcal{G}^r \mathbf{p}_j} |\psi\rangle, \qquad j = 1, \dots, N/n, \ r = 0, \dots, n-1
\end{equation}
are linearly dependent.
\end{proof}

As an illustration of this theorem consider the parity matrix
\begin{equation}
\mathcal{P} = \bmt -1 & 0 \\ 0 & -1\emt
\end{equation}
We have $n=2$ and  $\Tr(U_{\mathcal{P}})=2e^{i\theta}$ (respectively $e^{i\theta}$) if $N$ is even 
(respectively odd).   Also the number of $\mathcal{G}$-full points in $\mathbb{Z}_{N}^2$ is $N^2-4$ (respectively $N^2-1$) if $N$ is even (respectively odd).  So the theorem shows that $\mathcal{P}$ generates linear dependencies in every eigenspace for all even dimensions.

It will be seen that the conditions of the theorem are not very restrictive.  There are $N^2$ points in $\mathbb{Z}_N^2$ so one would expect that the requirement that $N$ of them  be $\mathcal{G}$-full should not be difficult to satisfy. Similarly with the requirement that $\Tr(U_{\mathcal{G}})\neq 0$.  
For a large class of dimensions we can make this statement sharp:

\begin{theorem}
\label{thm:LinDepGen}
Let $N=q_1\dots q_u$ for $u$ distinct odd prime numbers $q_1, \dots, q_u$.  Then, with the sole exception of the identity matrix, every matrix $\in\SL(2,\mathbb{Z}_{\bar{N}})=\SL(2,\mathbb{Z}_{N})$  whose order is a factor of $N$  generates linear dependencies (in the sense defined above) in every eigenspace.
\end{theorem}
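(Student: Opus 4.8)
The plan is to show that any matrix $\mathcal{G}$ as in the statement — nonidentity, with order $n$ dividing $N$ — satisfies all four hypotheses of Theorem~\ref{thm:genLinDep}, and then simply invoke that theorem. Since $N=q_1\cdots q_u$ is odd we have $\bar N=N$, so $\SL(2,\mathbb{Z}_{\bar N})=\SL(2,\mathbb{Z}_N)$ and the $N$-order coincides with the ordinary order. Hypotheses~1 and~2 are then immediate: $\mathcal{G}\neq I$ forces $n>1$, and $n\mid N$ by assumption, so $N$ is a multiple of $n$. The rest of the argument runs through the Chinese Remainder Theorem. Because $N$ is squarefree, $\mathbb{Z}_N\cong\prod_i\mathbb{F}_{q_i}$ induces $\mathbb{Z}_N^2\cong\prod_i\mathbb{F}_{q_i}^2$, $\SL(2,\mathbb{Z}_N)\cong\prod_i\SL(2,\mathbb{F}_{q_i})$, and a tensor factorisation $\mathbb{C}^N\cong\bigotimes_i\mathbb{C}^{q_i}$ under which $D_{\mathbf p}\dot{=}\bigotimes_i D_{\mathbf p^{(i)}}$ and $U_{\mathcal{G}}\dot{=}\bigotimes_i U_{\mathcal{G}^{(i)}}$. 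I write $\mathcal{G}^{(i)}$ for the reduction of $\mathcal{G}$ mod $q_i$ and $n_i$ for its order, so that $n=\mathrm{lcm}_i(n_i)$ and each $n_i\mid n\mid N$ is squarefree and odd.

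For Hypothesis~\ref{trCond} I would invoke the standard evaluation of the Weil-representation character for odd $N$: up to an irrelevant phase, $|\Tr(U_{\mathcal{G}})|^2$ equals $\#\{\mathbf v\in\mathbb{Z}_N^2:\mathcal{G}\mathbf v=\mathbf v\}$, which is at least one because the origin is always fixed; hence $\Tr(U_{\mathcal{G}})\neq 0$. Concretely, the tensor factorisation gives $\Tr(U_{\mathcal{G}})\dot{=}\prod_i\Tr(U_{\mathcal{G}^{(i)}})$, and whenever the upper-right entry of $\mathcal{G}^{(i)}$ is a unit mod $q_i$ the formula~(\ref{eq:UFdef}) exhibits each factor (up to phase) as a normalised quadratic Gauss sum $\tfrac1{\sqrt{q_i}}\sum_u \omega_{q_i}^{c u^2}$ with $\omega_{q_i}=e^{2\pi i/q_i}$, which is classically nonzero for an odd prime modulus; the remaining, lower-triangular, case is checked directly (again giving a nonzero value). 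This is exactly the nonvanishing required.

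The substantive step is Hypothesis~\ref{fullCond}: I must produce at least $N$ points of $\mathbb{Z}_N^2$ that are $N$-distinct and $\mathcal{G}$-full. Here I would count componentwise. Let $g_i$ be the number of points of $\mathbb{F}_{q_i}^2$ whose orbit under $\mathcal{G}^{(i)}$ has full length $n_i$. If $\mathbf p$ has every component $\mathbf p^{(i)}$ of full length $n_i$, then its orbit length is $\mathrm{lcm}_i(n_i)=n$, so $\mathbf p$ is $\mathcal{G}$-full; hence the number of $\mathcal{G}$-full points is at least $\prod_i g_i$, and it suffices to prove $g_i\geq q_i$ for every $i$, since then $\prod_i g_i\geq\prod_i q_i=N$. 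This reduces to a short classification of $\mathcal{G}^{(i)}\in\SL(2,\mathbb{F}_{q_i})$. If $\mathcal{G}^{(i)}=I$ every point is trivially full, so $g_i=q_i^2$. If $\mathcal{G}^{(i)}$ is semisimple of order $n_i>1$ (which is forced when $q_i\nmid n_i$), then for each prime $p\mid n_i$ the power $(\mathcal{G}^{(i)})^{n_i/p}$ has no eigenvalue $1$, so it fixes only the origin; the origin is the sole non-full point and $g_i=q_i^2-1$. Finally, because $n_i$ is squarefree and odd, the only way the characteristic can divide $n_i$ is $n_i=q_i$ with $\mathcal{G}^{(i)}$ unipotent, whose non-full points form the single fixed line, giving $g_i=q_i^2-q_i$. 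In every case $g_i\geq q_i$, so $\prod_i g_i\geq N$ and Hypothesis~\ref{fullCond} holds. With all four hypotheses verified, Theorem~\ref{thm:genLinDep} yields linear dependencies in every eigenspace of $U_{\mathcal{G}}$, which is the claim.

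I expect the classification of element orders in $\SL(2,\mathbb{F}_q)$ to be the one place needing care: specifically the point that a squarefree odd order divisible by the characteristic must equal the characteristic, which rules out elements of order $2q$ and mixed unipotent–semisimple elements and is precisely what makes $g_i\geq q_i$ hold uniformly. The other soft spot is importing the nonvanishing of $\Tr(U_{\mathcal{G}})$ for odd $N$; this rests on the classical evaluation of quadratic Gauss sums modulo an odd prime (equivalently, on the fixed-point formula for the Weil character), which is not established within the present excerpt and would have to be cited.
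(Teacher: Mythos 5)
Your proposal is correct, and it follows the same overall skeleton as the paper's proof: reduce to Theorem~\ref{thm:genLinDep}, then verify its hypotheses componentwise via the Chinese Remainder Theorem factorisation $\mathbb{Z}_N^2\cong\prod_j\mathbb{Z}_{q_j}^2$, $\SL(2,\mathbb{Z}_N)\cong\prod_j\SL(2,\mathbb{Z}_{q_j})$, and the associated tensor decomposition of $U_{\mathcal{G}}$. The two points where your execution genuinely differs are worth noting. First, for the count of $\mathcal{G}$-full points, the paper's Lemma~\ref{lm:FFull} proves a stronger statement -- the bound $K\ge N(q_1-1)\cdots(q_u-1)$ holds for \emph{arbitrary} $\mathcal{G}\in\SL(2,\mathbb{Z}_N)$, with no hypothesis on the order -- and its case analysis runs on whether $\Tr(\mathcal{G}_j)^2-4$ is a quadratic residue, a non-residue, or $\Tr(\mathcal{G}_j)=\pm 2$. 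You instead use the order hypothesis (each $n_j$ odd and squarefree) to eliminate the order-$2q_j$ case through the Jordan decomposition, arriving at the same component counts $q_j^2$, $q_j^2-1$, $q_j^2-q_j$; this is sound and arguably cleaner given the hypothesis, but it buys you a less general lemma, and your identification of the soft spot (that a squarefree odd order divisible by the characteristic forces a unipotent of order exactly $q_j$) is exactly right. Second, for hypothesis~\ref{trCond}, the paper's Lemma~\ref{lm:Espce} evaluates $\Tr(U_{\mathcal{G}_j})$ explicitly by Gauss sums, obtaining $|\Tr(U_{\mathcal{G}_j})|\in\{1,\sqrt{q_j},q_j\}$, whereas you import the Weil-character fixed-point formula $|\Tr(U_{\mathcal{G}})|^2=\#\{\mathbf{v}\colon\mathcal{G}\mathbf{v}=\mathbf{v}\}\ge 1$. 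These are equivalent facts (the paper's three values squared are precisely the possible fixed-point counts $1$, $q_j$, $q_j^2$), but the paper's route is self-contained modulo the classical Gauss-sum evaluation it cites, while yours rests on a character formula that, as you acknowledge, would need an external reference.
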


The proof of this theorem will  occupy us for the rest of this section.  It depends on the tensor product representation of the Clifford group which is described in Appendix B in \cite{Monomial}, and which we now briefly review.   For each $x\in\mathbb{Z}_N$ and $j$ in the range $1,\dots, u$ let $x_j$ be the unique element of $\mathbb{Z}_{q_j}$ such that $x_j = x$ mod $q_j$.  Then the Chinese Remainder Theorem~\cite{Rose} tells us that the map
\begin{equation}
x \mapsto (x_1,\dots,x_u)
\end{equation}
is an isomorphism of the ring $\mathbb{Z}_{N}$ onto the ring $\mathbb{Z}_{q_1} \times \dots \times \mathbb{Z}_{q_u}$ with inverse
\begin{equation}
(x_1,\dots, x_u) \mapsto \frac{N \kappa_1 x_1}{q_1}+\dots + \frac{N \kappa_u x_u}{q_u}
\end{equation}
where $\kappa_j$ is the multiplicative inverse of $N/q_j$ considered as an element of $\mathbb{Z}_{q_j}$.  Given 
\begin{align}
\mathbf{p} &= \bmt p_1 \\ p_2 \emt \in \mathbb{Z}_{N}^2 & \mathcal{G} & = \bmt \alpha & \beta \\ \gamma & \delta \emt \in \SL(2,\mathbb{Z}_N)
\\
\intertext{define}
\mathbf{p}_j &=\bmt p_{1j} \\ \kappa_j p_{2j} \emt  \in \mathbb{Z}_{q_j}^2
&
\mathcal{G}_j & = 
\bmt \alpha^{\vphantom{-1}}_j & \kappa_j^{-1} \beta^{\vphantom{-1}}_j 
\\ \kappa^{\vphantom{-1}}_j \gamma^{\vphantom{-1}}_j & \delta^{\vphantom{-1}}_j
\emt \in \SL(2,\mathbb{Z}_{q_j})
\label{eq:Fjdef}
\end{align}
Then define 
\begin{align}
\phi & \colon \mathbb{Z}_N^2  \to \mathbb{Z}_{q_1}^2 \otimes \dots \otimes \mathbb{Z}_{q_u}^2  &\phi(\mathbf{p}) &= \mathbf{p}_1\otimes \dots \otimes \mathbf{p}_u
\label{eq:phi}
\\
\psi &\colon \SL(2,\mathbb{Z}_N) \to \SL(2,\mathbb{Z}_{q_1}) \otimes \dots \otimes \SL(2,\mathbb{Z}_{q_u}) & \psi(\mathcal{G}) &= \mathcal{G}_1\otimes \dots \otimes \mathcal{G}_u
\label{eq:psi}
\end{align}
The map $\phi$ (respectively $\psi$) is an isomorphism of modules (respectively groups).  We have 
\begin{equation}
\phi(\mathcal{G}\mathbf{p})=\psi(\mathcal{G})\phi(\mathbf{p})
\end{equation}
for all $\mathcal{G}$, $\mathbf{p}$.
Let $\mathcal{H}_n$ be $n$-dimensional Hilbert space and let
\begin{equation}
V \colon \mathcal{H}_N \to \mathcal{H}_{q_1}\otimes \dots \otimes \mathcal{H}_{q_u}
\end{equation}
be the unitary defined by
\begin{equation}
V|x\rangle_N = |x_1\rangle_{q_1} \otimes \dots \otimes |x_u\rangle_{q_u}
\end{equation}
where $|0\rangle_n, \dots , |n-1\rangle_n$ is the standard basis in dimension $n$.  Then
\begin{align}
V D_{\mathbf{p}} V^{\dagger}& = D_{\mathbf{p}_1} \otimes \dots \otimes D_{\mathbf{p}_u}
\\
VU_{\mathcal{G}} V^{\dagger} & = U_{\mathcal{G}_1}\otimes \dots \otimes U_{\mathcal{G}_u}
\label{eq:symUTensor}
\end{align}
modulo the phase ambiguity in the definitions of $U_{\mathcal{G}}, U_{\mathcal{G}_1},\dots, U_{\mathcal{G}_u}$.
\begin{lemma}
\label{lm:FFull}
Let  $N=q_1\dots q_u$ for $u$ distinct odd prime numbers $q_1, \dots, q_u$. Let $\mathcal{G}\in \SL(2,\mathbb{Z}_N)$ be arbitrary.  Then the number of $\mathcal{G}$-full points in $\mathbb{Z}_N^2$ is $\ge N(q_1-1)\dots (q_u-1)$.
\end{lemma}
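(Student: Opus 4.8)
The plan is to reduce the count to the prime factors via the Chinese Remainder Theorem isomorphisms $\phi$ and $\psi$ of Equations~(\ref{eq:phi})--(\ref{eq:psi}), and then to prove a clean per-prime lower bound. Under $\phi$ a point $\mathbf{p}\in\mathbb{Z}_N^2$ corresponds to the tuple $(\mathbf{p}_1,\dots,\mathbf{p}_u)$ and, because $\phi(\mathcal{G}\mathbf{p})=\psi(\mathcal{G})\phi(\mathbf{p})$ with $\psi(\mathcal{G})=\mathcal{G}_1\otimes\dots\otimes\mathcal{G}_u$, we have $\mathcal{G}^r\mathbf{p}\leftrightarrow(\mathcal{G}_1^r\mathbf{p}_1,\dots,\mathcal{G}_u^r\mathbf{p}_u)$. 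Writing $n_j$ for the order of $\mathcal{G}_j$ in $\SL(2,\mathbb{Z}_{q_j})$, the order of $\mathcal{G}$ is $n=\LCM(n_1,\dots,n_u)$, and the orbit length of $\mathbf{p}$ under $\langle\mathcal{G}\rangle$ is the LCM of the orbit lengths of the $\mathbf{p}_j$ under $\langle\mathcal{G}_j\rangle$. Hence if each $\mathbf{p}_j$ has \emph{full} orbit length $n_j$, then $\mathbf{p}$ has orbit length $n$, i.e. $\mathbf{p}$ is $\mathcal{G}$-full. This already gives
\begin{equation}
\#\{\mathcal{G}\text{-full }\mathbf{p}\}\ \ge\ \prod_{j=1}^{u}\#\{\mathbf{v}\in\mathbb{Z}_{q_j}^2:\ \mathbf{v}\text{ has full }\mathcal{G}_j\text{-orbit}\},
\end{equation}
so it suffices to prove that for an odd prime $q$ and any $\mathcal{H}\in\SL(2,\mathbb{Z}_q)$ of order $m$, the number of $\mathbf{v}\in\mathbb{Z}_q^2$ with orbit length exactly $m$ is at least $q(q-1)$; the product of these bounds is precisely $N(q_1-1)\cdots(q_u-1)$.

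For the per-prime statement I would work over the field $\mathbb{F}_q=\mathbb{Z}_q$. A vector $\mathbf{v}$ fails to have full orbit length exactly when $\mathcal{H}^{m/p}\mathbf{v}=\mathbf{v}$ for some prime $p\mid m$, so the ``bad'' set is $S=\bigcup_{p\mid m}\ker(\mathcal{H}^{m/p}-I)$. The goal becomes $|S|\le q$, which yields $\ge q^2-q=q(q-1)$ good vectors. Since $\mathcal{H}$ has order $m$, each $\mathcal{H}^{m/p}\neq I$, so $\ker(\mathcal{H}^{m/p}-I)$ is a \emph{proper} $\mathcal{H}$-invariant subspace of the $2$-dimensional space $\mathbb{F}_q^2$; thus every summand in the union is $\{0\}$ or a line, and contributes at most $q$ points individually.

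The main obstacle is controlling the union $S$ across the different primes $p\mid m$: several distinct invariant lines would push $|S|$ above $q$. I would dispose of this by the conjugacy classification of $\mathcal{H}$ in $\SL(2,\mathbb{F}_q)$ according to its characteristic polynomial $x^2-\Tr(\mathcal{H})\,x+1$. If $\mathcal{H}$ is irreducible (eigenvalues in $\mathbb{F}_{q^2}\setminus\mathbb{F}_q$), the only invariant subspaces are $\{0\}$ and all of $\mathbb{F}_q^2$, forcing every $\ker(\mathcal{H}^{m/p}-I)=\{0\}$ and $|S|=1$. If $\mathcal{H}$ is split with two distinct eigenvalues $\mu\neq\mu^{-1}$, then $m=\mathrm{ord}(\mu)$ and no $\mathcal{H}^{m/p}$ has $1$ as an eigenvalue, so again $S=\{0\}$. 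If $\mathcal{H}=\pm I$ the claim is immediate. The only remaining (and tight) case is $\mathcal{H}$ non-semisimple with repeated eigenvalue $\pm1$: here $\mathcal{H}$ has a \emph{unique} invariant line $L$, every $\ker(\mathcal{H}^{m/p}-I)$ lies in $L$, and therefore $S\subseteq L$ with $|S|\le q$. In all cases $|S|\le q$, completing the per-prime bound.

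Assembling the pieces, the product bound gives $\#\{\mathcal{G}\text{-full }\mathbf{p}\}\ge\prod_j q_j(q_j-1)=N(q_1-1)\cdots(q_u-1)$, as required. I expect the only genuinely delicate point to be the invariant-subspace bookkeeping in the non-semisimple case, together with checking that the unipotent element $\bmt 1&1\\0&1\emt$, of order $q$, realises equality $|S|=q$; the verification that the $\kappa_j$-twist in the definitions of $\mathbf{p}_j$ and $\mathcal{G}_j$ is a bijection on each factor, and so leaves all counts unchanged, is routine.
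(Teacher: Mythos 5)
Your proposal is correct and follows essentially the same route as the paper: the identical CRT reduction (componentwise-full points are full, so $K\ge K_1\cdots K_u$), followed by the per-prime bound $K_j\ge q_j(q_j-1)$ derived from the same trichotomy of conjugacy types in $\SL(2,\mathbb{F}_{q_j})$ --- split with distinct eigenvalues, irreducible characteristic polynomial, and repeated eigenvalue $\pm 1$. The only difference is presentational: the paper conjugates to explicit normal forms (diagonal over $\mathbb{Z}_{q_j}$ or over $\mathbb{F}_{q_j^2}$, or $\pm$ unipotent) and counts full points directly, whereas you bound the complement $\bigcup_{p\mid m}\ker\bigl(\mathcal{H}^{m/p}-I\bigr)$ as a union of proper invariant subspaces; the bookkeeping is equivalent and both arguments hit the tight constant in the unipotent case.
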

\begin{proof}
Let $K$ be the number of $\mathcal{G}$-full points in $\mathbb{Z}^2_N$ and $K_j$ the number of $\mathcal{G}_j$-full points in $\mathbb{Z}_{q_j}^2$.
Use the isomorphisms defined in Equations~(\ref{eq:phi}) and~(\ref{eq:psi}) to identify $\mathcal{G}$ with $\mathcal{G}_1\otimes \dots \otimes \mathcal{G}_u$ and $\mathbf{p}$ with $\mathbf{p}_1\otimes \dots \otimes \mathbf{p}_u$.  Let $n$ be the order of $\mathcal{G}$ and $n_j$ the order of $\mathcal{G}_j$.  Then
\begin{equation}
n = \LCM(n_1,\dots,n_u)
\end{equation}
Suppose $\mathbf{p}_j$ is $\mathcal{G}_j$-full in $\mathbb{Z}_{q_j}^2$ for each $j$.  Then
\begin{equation}
\mathcal{G}^r \mathbf{p} = \mathbf{p}
\end{equation}
implies $r$ is a multiple of $n_j$ for each $j$, which in turn implies $r$ is a multiple of $n$.  So $\mathbf{p}$ is $\mathcal{G}$-full in $\mathbb{Z}^2_{q_j}$.  So $K \ge K_1 \dots K_u$. The problem thus reduces to showing that $K_j\ge q_j(q_j-1)$ for all $j$.

To see this we use the analysis in~\cite{Appleby09}.  Let $t_j = \Tr(\mathcal{G}_j)$ and let $Q_j$ (respectively $N_j$) be the set of quadratic residues (respectively non-residues) in $\mathbb{Z}_{q_j}$.  If $t_j^2-4\in Q_j$ then $\mathcal{G}_j$ is conjugate to a diagonal matrix
\begin{equation}
\tilde{\mathcal{G}}_j = \bmt \alpha & 0 \\ 0 & \alpha^{-1}\emt
\end{equation}
If $\mathbf{p}_j$ is non-zero then
\begin{equation}
\tilde{\mathcal{G}}_j^r \mathbf{p_j} = \mathbf{p_j}
\end{equation}
if and only if $r$ is a multiple of $n_j$.  So $K_j=q_j^2-1$.  If $t_j^2-4\in N_j$ we can diagonalize $\mathcal{G}_j$ over  the extension field $\mathbb{F}_{q^2_j}$.  So we again have that $\mathcal{G}_j$ is conjugate to a matrix of the form
\begin{equation}
\tilde{\mathcal{G}}_j = \bmt \alpha & 0 \\ 0 & \alpha^{-1}\emt
\end{equation} 
where $\alpha$ now belongs to $\mathbb{F}_{q^2_j}$ instead of $\mathbb{Z}_{q_j}$.  We again deduce that $K_j = q_j^2-1$.  Finally, if $t_j = \pm 2$ we have that $\mathcal{G}_j$ is conjugate to the matrix 
\begin{equation}
\tilde{\mathcal{G}}_j = \bmt \pm 1 & \beta \\ 0 & \pm 1 \emt
\end{equation}
for some $\beta$.  If $\beta=0$ then $K_j = q_j^2-1$; otherwise $K_j = q_j^2-q_j$.
\end{proof}
\begin{lemma}
\label{lm:Espce}
Let $N=q_1\dots q_u$ for $u$ distinct odd prime numbers $q_1, \dots, q_u$.  Then $|\Tr(U_{\mathcal{G}})|\ge 1$ for all $\mathcal{G}\in \SL(2,\mathbb{Z}_{N})$.  
\end{lemma}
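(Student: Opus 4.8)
The plan is to use the tensor-product factorisation of the Clifford representation recorded in Equation~(\ref{eq:symUTensor}) to reduce the statement to the case of a single odd prime, and then to evaluate the resulting trace as a classical quadratic Gauss sum. Since $V$ is unitary and the trace of a tensor product factorises, Equation~(\ref{eq:symUTensor}) gives
\begin{equation}
\Tr(U_{\mathcal{G}}) \dot{=} \prod_{j=1}^{u} \Tr(U_{\mathcal{G}_j})
\end{equation}
with $\mathcal{G}_j\in\SL(2,\mathbb{Z}_{q_j})$ as defined in Equation~(\ref{eq:Fjdef}); the phase ambiguities in the $U_{\mathcal{G}_j}$ are harmless because only the modulus matters. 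Taking absolute values yields $|\Tr(U_{\mathcal{G}})| = \prod_{j=1}^{u} |\Tr(U_{\mathcal{G}_j})|$, so it suffices to show that $|\Tr(U_{\mathcal{H}})|\ge 1$ for every $\mathcal{H}\in\SL(2,\mathbb{Z}_q)$ with $q$ an odd prime.

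For the prime case I would first reduce to matrices whose upper-right entry $\beta$ is nonzero (hence invertible) mod $q$. Because conjugation leaves $|\Tr(U_{\mathcal{H}})|$ invariant and every non-central element of $\SL(2,\mathbb{Z}_q)$ is conjugate to one with $\beta\neq 0$ (pick a non-eigenvector of $\mathcal{H}$ as the second basis vector), only the central elements $\pm I$ need separate treatment: $U_I\dot{=}\eye$ has $|\Tr|=q$, while $U_{-I}$ sends $|v\rangle\mapsto|-v\rangle$, whose only fixed index is $v=0$ (as $q$ is odd), giving $|\Tr(U_{-I})|=1$. When $\beta$ is invertible I apply the explicit formula~(\ref{eq:UFdef}); setting $u=v$ collapses the double sum to
\begin{equation}
\Tr(U_{\mathcal{H}}) \dot{=} \frac{1}{\sqrt{q}}\sum_{v=0}^{q-1}\tau^{\beta^{-1}(\alpha+\delta-2)\,v^2}
\end{equation}
Writing $t=\alpha+\delta$ and $c=\beta^{-1}(t-2)\bmod q$, and using $\tau=\omega^{(q+1)/2}=\omega^{2^{-1}}$ for odd $q$, each summand equals $e^{2\pi i a v^2/q}$ with $a\equiv 2^{-1}c$. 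If $t\equiv 2$ then $c=0$, the sum is $q$ and $|\Tr(U_{\mathcal{H}})|=\sqrt{q}$; if $t\not\equiv 2$ then $a\not\equiv 0$ and the classical evaluation of the quadratic Gauss sum gives $\bigl|\sum_{v}e^{2\pi i a v^2/q}\bigr|=\sqrt{q}$, so $|\Tr(U_{\mathcal{H}})|=1$. Either way $|\Tr(U_{\mathcal{H}})|\ge 1$, which combined with the factorisation proves the lemma.

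The genuinely non-routine ingredient is the modulus of the quadratic Gauss sum, so the hard part is really just invoking that identity correctly; the rest is bookkeeping. I expect the only delicate points to be the conjugation reduction to $\beta\neq 0$ and keeping track of the phase ambiguity through the tensor factorisation, both of which disappear once one passes to absolute values. It is worth noting the conceptual picture behind the computation: the two possible values $1$ and $\sqrt{q}$ are exactly $\sqrt{\#\{\mathbf{p}\in\mathbb{Z}_q^2:\mathcal{H}\mathbf{p}=\mathbf{p}\}}$, since $t\not\equiv 2$ is equivalent to $\det(\mathcal{H}-I)=2-t\neq 0$ and hence to $\mathcal{H}$ having only the trivial fixed point. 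This identifies $|\Tr(U_{\mathcal{H}})|^2$ with the number of fixed points of $\mathcal{H}$, which is always at least $1$.
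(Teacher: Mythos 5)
Your proof is correct and follows the same overall strategy as the paper's: reduce via the tensor-product isomorphism of Equation~(\ref{eq:symUTensor}) to the case of a single odd prime $q$, then evaluate $\Tr(U_{\mathcal{H}})$ as a quadratic Gauss sum whose modulus is $1$, $\sqrt{q}$ or $q$; your collapse of the double sum in Equation~(\ref{eq:UFdef}) to $\frac{1}{\sqrt{q}}\sum_v \tau^{\beta^{-1}(t-2)v^2}$ is exactly the paper's computation. The one place you genuinely diverge is the degenerate case $\beta=0$: the paper treats it head-on, using the explicit monomial form $U_{\mathcal{G}_j}=e^{i\theta}\sum_x \tau_j^{\alpha\gamma x^2}|\alpha x\rangle\langle x|$ valid when $\beta=0$ and splitting into the sub-cases $\alpha\neq 1$; $\alpha=1,\gamma\neq 0$; and $\alpha=1,\gamma=0$, each again a Gauss sum or trivial. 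You instead exploit conjugation invariance of $|\Tr(U_{\mathcal{H}})|$ (legitimate, since $U$ is a projective representation, so $U_{\mathcal{A}\mathcal{H}\mathcal{A}^{-1}}\dot{=}U_{\mathcal{A}}U_{\mathcal{H}}U_{\mathcal{A}}^{\dagger}$) to push every non-scalar element into the $\beta\neq 0$ case, leaving only $\pm I$ to check by hand; this is sound because a matrix all of whose vectors are eigenvectors is scalar, so a non-central $\mathcal{H}$ has a non-eigenvector that can be completed to an $\SL(2,\mathbb{Z}_q)$ basis change. Your route buys a small economy --- you never need the $\beta=0$ formula for the symplectic unitary, nor the Legendre-symbol bookkeeping for the case $\beta=0,\ \alpha=1,\ \gamma\neq 0$ --- at the cost of the (easy but necessary) conjugation lemma. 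Your closing remark that $|\Tr(U_{\mathcal{H}})|^2$ equals the number of fixed points of $\mathcal{H}$ in $\mathbb{Z}_q^2$ is a nice consistency check, matching all cases in the paper's final display, though it plays no logical role in the proof.
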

\begin{proof}
In view of Equation~(\ref{eq:symUTensor})  it is enough to show that
\begin{equation}
|\Tr(U_{\mathcal{G}_j})|\ge 1
\end{equation}
for all $j$.    To see this let
\begin{equation}
\mathcal{G}_j = \bmt \alpha & \beta \\ \gamma & \delta \emt
\end{equation}
We have~\cite{Appleby09}
\begin{equation}
U_{\mathcal{G}_j} =
\begin{cases}
\frac{e^{i\theta}}{\sqrt{q_j}} \sum_{x,y=0}^{q_j-1} \tau_j^{\beta^{-1}(\delta x^2-2 x y + \alpha y^2} |x\rangle \langle y |
\qquad & \beta \neq 0
\\
e^{i\theta} \sum_{x=0}^{q_j-1} \tau_j^{\alpha \gamma x^2} |\alpha x\rangle \langle x |
\qquad & \beta = 0
\end{cases}
\end{equation}
where $\tau_j = e^{\frac{2\pi i}{q_j}}$ and $e^{i\theta}$ is a phase.  So
\begin{equation}
\Tr(U_{\mathcal{G}_j}) = 
\begin{cases}
\frac{e^{i\theta}}{\sqrt{q_j}} \sum_{x=0}^{q_j-1} \omega_j^{2^{-1}\beta^{-1}(t-2)x^2} \qquad &\beta \neq 0
\\
e^{i\theta} \qquad & \beta = 0,\  \alpha \neq 1
\\
e^{i\theta}\sum_{x=0}^{q_j-1} \omega_j^{2^{-1}\gamma x^2} \qquad &\beta=0,\ \alpha = 1
\end{cases}
\end{equation}
where $t=\Tr(\mathcal{G}_j)$ and $\omega_j = e^{\frac{2\pi i}{q_j}}$.  Performing the Gauss sums~\cite{Rose} we find
\begin{align}
\Tr(U_{\mathcal{G}_j}) &=
\begin{cases}
\eta e^{i\theta} \lsym{2\beta(t-2)}{q_j} \qquad & \beta\neq 0 , \ t\neq 2
\\
\eta e^{i\theta} \sqrt{q_j} \qquad &\beta\neq 0 , \ t = 2
\\
e^{i\theta} \qquad & \beta = 0 , \ \alpha\neq 1
\\
\eta e^{i\theta} \sqrt{q_j} \lsym{2\gamma}{q_j} \qquad & \beta = 0 , \ \alpha = 1, \ \gamma \neq 0
\\
e^{i\theta}q_j \qquad & \beta = 0,\ \alpha = 1,\  \gamma = 0
\end{cases}
\end{align}
where 
\begin{equation}
\eta = \begin{cases} 1 \qquad & q_j = 1 \mod 4 \\ i \qquad & q_j = -1 \mod 4 \end{cases}
\end{equation}
where $\lsym{x}{q_j}$ is the Legendre symbol, and where we have used the multiplicative property of the Legendre symbol together with the fact that $\lsym{x^{-1}}{q_j}=\lsym{x}{q_j}$ for all non-zero $x$.  We see that in every case
\begin{equation}
|\Tr(U_{\mathcal{G}_j})| = \text{$1$, $\sqrt{q_j}$ or $q_j$}
\end{equation}
\end{proof}
\begin{proof}[Proof of Theorem~\ref{thm:LinDepGen}] 
Immediate consequence of Theorem~\ref{thm:genLinDep} and Lemmas~\ref{lm:FFull}, \ref{lm:Espce}.
\end{proof}
 It would be interesting to see if one can generalise Theorem~\ref{thm:LinDepGen} to the case of even dimensions, and odd dimensions for which the prime factorization contains primes occurring with multiplicity $>1$.

\section{Linear dependencies in the monomial representation}

In the motivating three-dimensional example there was an interesting interplay between linear dependencies in WH orbits on the one hand, and preferred orthonormal bases on the other. It is not difficult to find fiducial vectors giving rise to a WH orbit with $N$ linearly dependent sets of $N$ vectors each, in such a way that their normal vectors form an orthonormal basis: if the generator $Z$ is diagonal, any fiducial vector with a zero entry will give rise to such a linearly dependent set under repeated action with $Z$. Acting with $X$ on the fiducial will permute its entries, and one ends up with $N$ linearly dependent sets in the orbit. 

The situation is more interesting if the dimension is a square number, $N = n^2$. For this case it has recently been shown that there exists a choice of basis in which the entire Clifford group is represented by monomial matrices, i.e. by permutation matrices in which the unit entries have been replaced by phase factors \cite{Monomial}. This representation stems from the observation that in square dimensions 
\begin{equation} 
[X^n,Z^n] = 0 
\end{equation}
Thus $X^n$ and $Z^n$ can be simultaneously diagonalised, and the resulting basis turns out to have the above property. Denote the basis vectors by $|r,s\rangle $, where $r,s$ are integers modulo $n$, and let us refer to the basis as the monomial basis. The monomial representation of the Clifford group is given by 
\begin{equation} 
X|r,s\rangle = \left\{ \begin{array}{ll} |r,s+1\rangle 
& \mbox{if} \ s+1 \neq 0 \ \mbox{mod} \ n \\
\ \\
q^r|r, 0\rangle & \mbox{if} \ s + 1 = 0 \ \mbox{mod} \ n \end{array} \right. 
\label{WHrep} 
\end{equation} 
\begin{eqnarray} 
Z|r,s\rangle = \omega^s |r-1,s\rangle  
\end{eqnarray}
where $q = e^{\frac{2 \pi i}{n}}$.
\begin{equation}
 U_G|r,s\rangle = e^{i\theta}\tau^{\beta^{-1}
(\delta s'^2 -2ss' + \alpha s^2)}|\delta r - \gamma s + 
m \gamma \delta, - \beta r + \alpha s + m\alpha \beta \rangle 
 \label{therep} 
\end{equation}
where 
\begin{equation}
 s' = - \beta r + \alpha s + m\alpha 
 \hspace{8mm} m = \left\{ \begin{array}{lll} 0 & \ & n \ \mbox{is odd} \\ 
\frac{n}{2} & \ & n \ \mbox{is even}  \end{array} \right. 
\end{equation}
Note that Equation (\ref{therep}) assumes $\beta$ is relatively prime to $\bar{N}$, otherwise one has to use the decomposition given in Equation (\ref{eq:primeDecomp}) of Section 2. Evidently the $N$ orthogonal rays that make up the basis is in itself an orbit of the full Clifford group. Moreover, one sees by inspection that the vector $|0,0\rangle $ is left invariant by the full symplectic group whenever $N$ is odd. In this sense the monomial basis is a distant cousin of the `special SIC' in three dimensions---or perhaps of the MUB in three dimensions.

Now suppose that $3|n$, and consider the Zauner matrix 
\begin{equation} 
{\cal Z} = \left( \begin{array}{rr} 0 & - 1 \\ 1 & - 1 \end{array} 
\right) \ 
\end{equation}
If we scrutinise the explicit expression (\ref{therep}) we find that there are three non-zero diagonal elements in the unitary operator $U_{\cal Z}$. If we adjust the phase $e^{i\theta}$ according to Zauner's recipe one of them equals $\eta$, and the other two equal $1$. It follows that any vector in the ${\cal H}_1$ eigenspace must have one component equal to 0. Since the WH group contains $N$ diagonal elements (namely the elements of the subgroup generated by $X^n$ and $Z^n$) this means that the WH orbit generated from a fiducial vector in ${\cal H}_1$ contains $N$ vectors confined to a subspace orthogonal to one of the vectors in the monomial basis. Since the entire WH group is given by monomial matrices the orbit can be divided into $N$ such linearly dependent sets, spanning $N$ subspaces each orthogonal to a vector in the monomial basis. In particular this applies to the known SICs in these dimensions.  

A similar argument applies if $2|n$. We consider the symplectic matrix 
\begin{equation} 
{\cal F} = \left( \begin{array}{rr} 0 & 1 \\ - 1 & 0 \end{array} 
\right) 
\end{equation}   
After adjusting the arbitrary phase so that $U_{\cal F}^4 = {\bf 1}$ and so that the largest eigenspace corresponds to the eigenvalue 1, we find that $U_{\cal F}$ has two diagonal non-zero elements, one equal to 1 and the other equal to $\omega^{N/4}$. Following the same logic as above we see that every WH orbit that includes a fiducial invariant under $U_{\cal F}$ must be divided into $N$ linearly dependent sets of $N$ vectors orthogonal to one of the vectors in the monomial basis. 

Variants of this argument can be used also if the dimension is $N = kn^2$, where $k$ is any integer. In this case the Hilbert space splits into a direct sum of $n^2$ copies of a $k$-dimensional Hilbert space, and there exists a representation of the Clifford group in which every group element acts within and permutes these subspaces \cite{monomial2}. This representation is $k$-nomial, in the sense that it is given by a permutation matrix of size $n^2$ in which the unit entries have been replaced by matrices of size $k$. For instance, one finds in dimension eight that any vector in the smallest eigenspace of the Zauner unitary has two zero entries, and again there will be linear dependencies in a WH orbit arising from a fiducial in this eigenspace. A SIC fiducial does in fact exist in this eigenspace \cite{Sco10}.

\section{Conclusion}

The Hesse configuration singles out `special SICs' in dimension 3 by the higher number of linear dependencies among their vectors. We searched for similar patterns in higher dimensions through exhaustive numerical searches for sets of $N$ linearly dependent vectors in SICs in dimensions $N=4$ to 8 and partial searches in dimensions $N=9$ and 12. In dimension 8, the SICs with fiducials in the smallest subspace of the Zauner unitary, $\mathcal{H}_{\eta^2}$, also stand out for exhibiting a higher number of linear dependencies than other WH orbits with a fiducial vector from the same subspace. This could extend the dimension 3 connection between elliptic curves and SICs (which gives rise to the Hesse configuration) to dimension 8 and, as SIC fiducials always lie in $\mathcal{H}_{\eta^2}$ for dimensions equal to 8 mod 9, possibly to an infinite family of dimensions. 

In the dimensions we looked at, sets of $N$ SIC vectors are usually linearly independent. We proved that it is always possible to find dependencies in SICs for dimensions divisible by 3 (where the SIC fiducials sit in $\mathcal{H}_{1}$) and for dimensions equal to 8 mod 9 (where the SIC fiducials sit in $\mathcal{H}_{\eta^2}$). However, this dependency structure in dimensions divisible by 3 also occurs for any WH orbit when the fiducial vector is taken from $\mathcal{H}_{1}$ and so is not a property uniquely tied to SICs. We also proved that linear dependencies will arise for WH orbits whose fiducial vector is in $\mathcal{H}_{1}$ for dimensions $N=3k$; WH orbits whose fiducial vector is in $\mathcal{H}_{\eta}$ for dimensions $N=3k$ and $N=3k+1$; and WH orbits whose fiducial vector is in $\mathcal{H}_{\eta^2}$ for all dimensions. These theorems do not account for all the linear dependencies; our numerical results often revealed a higher number of linearly dependent sets of vectors than the theorems predict.

Looking in detail at linear dependencies in dimension 6 revealed some surprising structure. Firstly, the normal vectors (i.e. the vectors orthogonal to each 5-dimensional subspace spanned by 6 linearly dependent vectors) collect into orthogonal sets of 4 when the HW orbit is a SIC. When the orbit is not a SIC, these normal vectors lose their orthogonality pattern. Secondly, the normal vectors of dependencies from both SICs and non-SICs form 30 2-dimensional SICs. This repeats in dimension 9, where some of the normal vectors form 3-dimensional SICs. We did not find 4-dimensional SICs from normal vectors in dimension 12, but our search was not exhaustive and we cannot rule out their existence either.

We also proved a number of theorems for linear dependencies using unitaries other than the Zauner unitary. Our results represent the beginning of a theory for when linear dependencies occur in WH orbits. This is of interest in itself. Also, just possibly, they may contain an important clue for the SIC existence problem.

\section*{Acknowledgements}

IB and KB thank Markus Grassl for taking an interest in this problem when we visited Singapore, and especially for drawing our attention to the linear dependencies in the ``odd'' SIC in eight dimensions. We also thank a referee for constructive comments. HBD was supported by the Natural Sciences and Engineering Research Council of Canada and by the U.S. Office of Naval Research (Grant No. N00014-09-1-0247). IB was supported by the Swedish Research Council under contract VR 621-2010-4060. DMA was supported in part by the U.S. Office of Naval Research (Grant No.\ N00014-09-1-0247) and by the John Templeton Foundation. Research at Perimeter Institute is supported by the Government of Canada through Industry Canada and by the Province of Ontario through the Ministry of Research \& Innovation.

\end{document}